\documentclass[12pt]{article}

\usepackage[utf8]{inputenc}
\usepackage[T1]{fontenc}

\hyphenation{regarding}

\usepackage{amsmath,amssymb, amsthm}
\usepackage{hyperref}
\usepackage{stmaryrd}
\usepackage{enumerate}
\usepackage{enumitem}
\usepackage[algoruled,vlined,english,linesnumbered]{algorithm2e}
\usepackage{comment}
\usepackage{multirow}
\usepackage{tikz}
\newcommand{\noopsort}[1]{}

\definecolor{purple}{rgb}{0.6,0,0.6}

\def\binom#1#2{\Big(\begin{array}{cc} #1 \\ #2 \end{array}\Big)}

\title{A Tropical F5 algorithm}
\author{       
        Tristan Vaccon\footnote{Université de Limoges, tristan.vaccon@unilim.fr}, Kazuhiro Yokoyama\footnote{Rikkyo University, kazuhiro@rikkyo.ac.jp} 
}
\date{2017}

\begin{document}

\newtheorem{theo}{Theorem}[section]
\newtheorem{lem}[theo]{Lemma}
\newtheorem{prop}[theo]{Proposition}
\newtheorem{cor}[theo]{Corollary}
\newtheorem{quest}[theo]{Question}
\newtheorem{conj}[theo]{Conjecture}
\theoremstyle{definition}
\newtheorem{rmk}[theo]{Remark}
\newtheorem{ex}[theo]{Example}
\newtheorem{deftn}[theo]{Definition}

\maketitle

\begin{abstract}
Let $K$ be a field equipped with a valuation. Tropical varieties over $K$ can be defined with a theory of Gröbner bases taking into account the valuation of $K$.
While generalizing the classical theory of Gröbner bases, it is not clear how modern algorithms for computing Gröbner bases 
can be adapted to the tropical case.
Among them, one of the most efficient is the celebrated F5 Algorithm of Faugère.

In this article, we prove that, for homogeneous ideals, it can be adapted to the tropical case. We prove termination and correctness.
 
Because of the use of the valuation, the theory of tropical Gröbner bases is promising for stable computations over polynomial rings over a $p$-adic field. We provide numerical examples to illustrate
time-complexity and $p$-adic stability of this tropical F5 algorithm.
\end{abstract}

\section{Introduction}

The theory of tropical geometry is only a few decades old. It has nevertheless already proved to be of significant value, with applications in algebraic geometry, combinatorics, computer science, and non-archimedean geometry (see \cite{MS:2015}, \cite{EKL06}) and even attempts at proving the Riemann hypothesis (see \cite{Connes:2015b}).

Effective computation over tropical varieties make decisive usage of Gröbner bases, but before Chan and Maclagan's definition of tropical Gröbner bases taking into account the valuation in \cite{Chan:2013,CM:2013}, computations were only available over fields with trivial valuation where standard Gröbner bases techniques applied.

In this document, we show that following this definition, an F5 algorithms can be performed to compute tropical Gröbner bases.

Our motivations are twofold. 
Our first objective is to provide with the F5 algorithm a fast algorithm for tropical geometry purposes.
Indeed, for classical Gröbner bases, the F5 algorithm, along with the F4, is recognized as among the fastest available nowadays.

Secondly, while our algorithms
are valid and implemented for computations over $\mathbb{Q},$ 
we aim at computation over fields with valuation that might not be effective, such as $\mathbb{Q}_p$ or $\mathbb{Q} (( t )).$ 
Indeed, in \cite{Vaccon:2015}, the first author has studied the computation of tropical Gröbner bases over such fields
through a Matrix-F5 algorithm. 
For some special term orders, the numerical stability is then remarkable. 
Hence, our second objective in designing a tropical F5 algorithm is to pave the way for an
algorithm that could
at the same time be comparable to the fast methods for classical Gröbner bases, have a termination criterion and 
still benefit from the stability that can be obtained for tropical Gröbner bases.

\vspace{-.2cm}

\subsection{Related works on tropical Gröbner bases} We refer to the book of Maclagan and Sturmfels \cite{MS:2015} for an introduction to computational tropical algebraic geometry.

The computation of tropical varieties over $\mathbb{Q}$ with trivial valuation is available in the Gfan package by Anders Jensen (see \cite{Jensen}), by using standard Gröbner bases computations.
Yet, for computation of tropical varieties over general fields, with non-trivial valuation, such techniques are not readily available. 
Then Chan and Maclagan have developed in \cite{CM:2013} a way to extend the theory of Gröbner bases to take into account the valuation and allow tropical computations. Their theory of tropical Gröbner bases is effective and allows, with a suitable division algorithm, a Buchberger algorithm.
Following their work, a Matrix-F5 algorithm has been proposed in \cite{Vaccon:2015}.

\vspace{-.2cm}

\subsection{Main idea and results}

Let $G'$ be a finite subset of homogeneous polynomials of $A:=k[X_1,\dots, X_n]$
for $k$ a field with valuation.\footnote{\textit{e.g.} $\mathbb{Q}$ or $\mathbb{Q}_p$
with $p$-adic valuation, or $\mathbb{Q}[[X]]$ with $X$-adic} We assume that $G'$
is a tropical Gröbner basis of the ideal $I'$ it spans for a given
tropical term order $\leq.$
Let $f_1 \in A$ be homogeneous. We are interested in computing a tropical 
Gröbner basis of
$I=I'+\left\langle f_1 \right\rangle.$
In this homogeneous context, following Lazard \cite{Lazard:1983}, we can
perform computations in $I_d=I \cap k[X_1,\dots, X_n]_d.$
$I_d$ can be written as a vector space as
$I_d=\left\langle x^\alpha f_1, \: \vert x^\alpha \vert + \vert f_1 \vert =d \right\rangle +I_d' ,$
with $\vert \: \vert$ denoting total degree.
The second summand is already well-known as $G'$ is a tropical Gröbner basis.
Thanks to this way of writing the first summand, we can then
filtrate the vector space $I_d$ by ordering the possible 
$x^\alpha.$
The main idea of the F5 algorithm of Faugère \cite{F5}
is to use knowledge of this filtration
to prevent unnecessary computations.
Our main result is then:

\begin{theo} \label{thm_intro}
The Tropical F5 algorithm (Algorithm \ref{F5_algo} on page \pageref{F5_algo})
computes a tropical Gröbner basis of $I.$
If $f_1$ is not a zero-divisor in $A/I'$ then no
polynomial reduces to zero during the computation.
\end{theo}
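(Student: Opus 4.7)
The plan is to adapt the classical F5 correctness and no-reduction-to-zero arguments of \cite{F5} to the tropical setting, proceeding degree by degree as dictated by the Lazard framework recalled in the introduction. For each degree $d$, the algorithm maintains a signature-labelled family of polynomials whose tropical leading terms, together with those of $G'$, control the vector space $I_d$. Both assertions of the theorem will follow as invariants of this degree-wise construction, while termination will come from the fact that a tropical Gröbner basis of $I$ exists in degrees bounded by a Macaulay-type bound, combined with the finiteness of signatures of any fixed degree.

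For correctness I argue by induction on $d$ with invariant: after processing degree $d$, the set of computed polynomials together with $G'$ is a tropical Gröbner basis of $I \cap A_d$. The inductive step rests on the tropical matrix-F5 analogue of Lazard's theorem from \cite{Vaccon:2015}, namely that a tropical row-reduction (in the sense of \cite{CM:2013}) of the Macaulay matrix in degree $d$ yields a tropical Gröbner basis in that degree. The only non-formal point is to check that rows discarded by the F5 criterion are redundant, i.e.\ lie in the $k$-span of rows with strictly smaller signature together with elements of $I'_d$. This is the standard rewrite: if $x^\alpha$ is divisible by the tropical leading monomial of some $g\in G'$, then $x^\alpha f_1$ is expressible modulo $I'_d$ in terms of rows with strictly smaller signature, so discarding its signature does not decrease the span.

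For the no-reduction-to-zero claim, suppose a row with signature $x^\alpha f_1$ reduces to zero in some degree $d$. The signature bookkeeping forces an identity of the form
\[
x^\alpha f_1 \;=\; h \;+\; \sum_{x^\beta \prec x^\alpha} c_\beta\, x^\beta f_1 ,
\]
with $h \in I'_d$ and $c_\beta \in k$. Reducing modulo $I'$ gives $\bigl(x^\alpha - \sum_\beta c_\beta x^\beta\bigr) f_1 \in I'$; if $f_1$ is not a zero-divisor in $A/I'$, then $x^\alpha - \sum_\beta c_\beta x^\beta \in I'$, whose tropical leading monomial is $x^\alpha$. Hence $x^\alpha$ is divisible by the tropical leading monomial of some element of $G'$, so the F5 criterion would already have discarded the signature $x^\alpha f_1$ before any reduction could take place, a contradiction.

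The main obstacle I expect is the interaction between the F5 signature ordering and the tropical reduction procedure. Classically, leading monomials depend only on exponents, which cleanly separates the signature from the reduction; in the tropical case the valuations of coefficients also enter the leading-term choice, so one must verify that signature-preserving row operations remain compatible with tropical echelonization. Ensuring that the tropical division algorithm of \cite{CM:2013} can be lifted to signature-labelled rows without ever overwriting a larger signature with a smaller one is the technically delicate point of the whole argument.
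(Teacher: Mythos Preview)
Your no-reduction-to-zero argument is essentially the paper's (Proposition~6.4): if $f_1$ is regular then $LM(Syz_{f_1})=LM(I')$, so any row that reduces to zero would have had its signature caught by the F5 criterion. That part is fine.

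The correctness argument, however, has a genuine gap. You are implicitly treating Algorithm~\ref{F5_algo} as a Matrix-F5: form the full degree-$d$ Macaulay matrix, discard the rows with signature in $LM(I')$, and tropically row-reduce. For that algorithm your ``only non-formal point'' (discarded rows are redundant) would indeed be the whole story. But Algorithm~\ref{F5_algo} does \emph{not} build the full Macaulay matrix: in each degree it inserts only the $S$-polynomials of the current admissible pairs together with the $\mathfrak{S}$-reductors produced by Symbolic Preprocessing. So the non-trivial direction is not why the discarded rows are redundant, but why the rows you \emph{keep} suffice to hit every new leading monomial of $I_d$. That is exactly the content of a Buchberger/F5 criterion, and it does not follow from the Lazard/Matrix-F5 statement you invoke. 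The paper supplies this missing ingredient through the $\mathfrak{S}$-Gr\"obner basis machinery: the bijection $\Phi$ between $LM(I)\setminus LM(I')$ and $NS(Syz_{f_1})$, the characterisation of $\mathfrak{S}$-irreducibility via $\Phi$, and then Theorem~\ref{thm:F5crit} (the tropical F5 criterion), whose proof is an adaptation of the classical Buchberger rewriting argument using the signature filtration. Correctness of the algorithm (Theorem~6.3) is then proved by induction on \emph{signature}, not on degree, precisely because one must certify inductively that the guessed signature $x^\alpha S(g)$ equals the true signature $S(x^\alpha g)$ before one can trust the reductors chosen by Symbolic Preprocessing.

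Your termination argument also relies on an unproved ``Macaulay-type bound'' for tropical Gr\"obner bases. The paper does not use any such bound; instead it shows (Proposition~\ref{prop:finiteness_S_basis}, via Dickson's lemma on the pairs $(LM(g),S(g))$) that every $\mathfrak{S}$-GB contains a finite one, and deduces that beyond some degree no new leading monomials appear, so the pair set empties. Finally, the obstacle you flag at the end---compatibility of the signature order with tropical reduction---is real, and it is not resolved in your sketch; the paper handles it by building $\leq_{sign}$ so that $NS(Syz_{f_1})$ sits below $LM(Syz_{f_1})$ in each degree, which is what makes Lemma~\ref{lem:sign_comp_mult} and Proposition~\ref{prop:S-irreduc} go through.
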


\subsection{Notations}
Let $k$ be a field with valuation  $val.$
The polynomial ring $k[X_1,\dots, X_n]$ will be denoted by $A.$ Let $T$ be the set of monomials of $A.$
For $u=(u_1,\dots,u_n) \in \mathbb{Z}_{\geq 0}^n$, we write $x^u$ for
$X_1^{u_1} \dots X_n^{u_n}$ and $\vert x^u \vert$ for its degree.
For $d \in \mathbb{N},$ $A_d$ is the vector space of polynomials in $A$
of degree $d.$ Given a mapping $f : U \rightarrow V,$ $Im(f)$
denote the image of $f.$ For a matrix $M,$ $Rows(M)$ is the list
of its rows, and $Im(M)$ denotes
the left-image of $M$ (\textit{n.b.} $Im(M)=span(Rows(M)$).
For $w \in Im(val)^n \subset \mathbb{R}^n$ and $\leq_1$ a monomial order on $A,$
we define $\leq$ a tropical term order as in the following definition:

\begin{deftn} \label{defn:trop_term_order}
Given $a,b \in k^*$ and $x^\alpha$ and $x^\beta$ two monomials in $A$, we write $a x^\alpha \geq b x^\beta$ if $val(a)+w \cdot \alpha < val(b) +w \cdot \beta$, or $val(a)+w \cdot \alpha = val(b) +w \cdot \beta$ and $x^\alpha \geq_1 x^\beta$.
\end{deftn}

This is a total term order on $A.$ We can then define accordingly for 
$f \in A$ its highest term, denoted by $LT(f),$ and the corresponding
monomial $LM(f).$ These defintions extend naturally to 
$LM(I)$ and $LT(I)$ for $I$
an ideal of $A.$ A tropical Gröbner bases of $I$ (see \cite{CM:2013, Vaccon:2015}) is then
a subset of $I$ such that its leading monomials generate as a monoid $LM(I).$
We denote by $NS(I)$ the set of monomials $T \setminus LM(I).$
We will write occasionally \textit{tropical GB}.

Let $G'$ be a finite subset of homogeneous polynomials of $A$
that is a tropical Gröbner basis of the ideal $I'$ it spans.
Let $f_1 \in A$ be homogeneous. We are interested in computing a tropical Gröbner basis of
$I=I'+\left\langle f_1 \right\rangle.$

\subsection*{Acknowledgements}

We thank Jean-Charles Faugère, Pierre-Jean Spaenlehauer, Masayuki Noro, Naoyuki Shinohara and Thibaut Verron
for fruitful discussions.

\section{Signature}

Contrary to the Buchberger or the F4 algorithm, the F5 algorithm
relies on tags attached to polynomial so as
to avoid unnecessary computation thanks to
this extra information.
Those tags are called \textit{signature}, and they
are decisive for the F5 criterion, which is one of 
the main ingredients of the F5 algorithm.

In this section, we provide a definition for the notion
of signature we need for the F5 algorithm
and deduce some of its first properties.

\begin{deftn}[Syzygies]
Let $v$ be the following $k$-linear map defined by the multiplication by $f_1$:
\begin{equation*}
      v : \: \:      
     \begin{array}{ccl}
      A & \rightarrow & A/I'\\
      1 & \mapsto & f_1.\\
     \end{array}
\end{equation*}
Let $Syz_{f_1}=Ker (v),$ $LM(Syz_{f_1})$ be the leading monomials of the 
polynomials in  $Syz_{f_1}$ and $NS( Syz_{f_1})=T \setminus LM(Syz_{f_1})$
the normal set of monomials for the module of syzygies.
\end{deftn}
\begin{prop} \label{prop:rem_syz}
$\left\langle v(NS(Syz_{f_1})) \right\rangle=Im(v).$
\end{prop}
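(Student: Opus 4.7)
The plan is to prove the equality by double inclusion. The inclusion $\langle v(NS(Syz_{f_1})) \rangle \subseteq Im(v)$ is immediate, since $NS(Syz_{f_1}) \subseteq A$ and $Im(v)$ is a $k$-subspace of $A/I'$ containing $v(NS(Syz_{f_1}))$.

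For the reverse inclusion, I would pick an arbitrary $v(g) \in Im(v)$ with $g \in A$ and exhibit a decomposition $g = s + r$ with $s \in Syz_{f_1}$ and $r$ a $k$-linear combination of monomials in $NS(Syz_{f_1})$. Applying $v$ then gives $v(g) = v(r)$, since $s \in Syz_{f_1} = \ker v$, and $v(r)$ is by $k$-linearity a combination of elements of $v(NS(Syz_{f_1}))$, which is precisely what is required.

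To produce such a decomposition, I would perform the standard Gröbner-style reduction: for each $m \in LM(Syz_{f_1})$, fix once and for all some $h_m \in Syz_{f_1}$ with $LM(h_m) = m$; then iterate on $g$, either cancelling $LT(g)$ by subtracting a suitable scalar multiple of $h_{LM(g)}$ when $LM(g) \in LM(Syz_{f_1})$ (this strictly lowers the leading monomial and contributes to $s$), or transferring $LT(g)$ into $r$ when $LM(g) \in NS(Syz_{f_1})$ and iterating on $g - LT(g)$. The main obstacle to watch out for is termination, since the tropical term order from Definition \ref{defn:trop_term_order} is not \emph{a priori} a well-order on all of $T$; but $f_1$ and the generators of $I'$ are homogeneous, so $v$ is a graded $k$-linear map and $Syz_{f_1}$ is a homogeneous ideal, allowing us to work degree by degree on each finite-dimensional piece $A_d$, where the tropical order restricts to a total order on a finite set and is therefore automatically a well-order. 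This ensures the iterative reduction terminates and the decomposition exists.
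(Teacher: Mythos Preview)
Your overall strategy---reduce an arbitrary $g\in A_d$ modulo $Syz_{f_1}\cap A_d$ until only monomials of $NS(Syz_{f_1})$ survive---is exactly the paper's, and the observation that homogeneity lets you work degree by degree is the right one. But there is a genuine gap in the termination argument. You assert that subtracting a suitable multiple of $h_{LM(g)}$ ``strictly lowers the leading monomial''. In the tropical setting this is false: cancelling $LT(g)$ certainly removes that monomial and strictly lowers the leading \emph{term}, but the new leading monomial $LM(g')$ can be \emph{larger} than $LM(g)$ in the induced order on monomials, because the comparison of terms depends on the valuations of the coefficients. Concretely, with two monomials $m_0>m_1$ (say $w\cdot\alpha_0=-1$, $w\cdot\alpha_1=0$), taking $g=m_1+p^2m_0$ over $\mathbb Q_p$ gives $LM(g)=m_1$, yet after cancelling $m_1$ by $h_{m_1}=m_1-p^3m_0$ one gets $g'=(p^2+p^3)m_0$ with $LM(g')=m_0>m_1$. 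Worse, the tropical term order on $A_d$ is a total order on an \emph{infinite} set of terms (finitely many monomials but infinitely many coefficients), so ``total order on a finite set, hence a well-order'' does not apply, and one can build choices of $h_{m_0},h_{m_1}$ supported on $\{m_0,m_1\}\subset LM(Syz_{f_1})$ for which your iteration cycles forever with valuations drifting to $+\infty$.

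The paper sidesteps this by first passing to the \emph{reduced} row-echelon form: its reducers $g_{\alpha_i}$ have $\alpha_i$ as their \emph{only} monomial outside $NS(Syz_{f_1})$, so each subtraction eliminates one $\alpha_i$ and introduces only $NS$-monomials. The reduction is then a single pass over the finitely many $\alpha_i$'s appearing in $g$, and termination is immediate with no appeal to well-ordering. Your argument is easily repaired the same way: instead of arbitrary $h_m$, take the $h_m$'s in reduced echelon form relative to the basis $(LM(Syz_{f_1})\cap A_d)\cup(NS(Syz_{f_1})\cap A_d)$ of $A_d$; alternatively, simply invoke that in the finite-dimensional space $A_d$ the subspace $Syz_{f_1}\cap A_d$ has $\{g_{\alpha_i}\}$ as a basis and $NS(Syz_{f_1})\cap A_d$ spans a linear complement.
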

\begin{proof}
We can prove this claim degree by degree. The method is quite
similar to what was developped in Section 3.2 of \cite{Vaccon:2015}.
Let $d \in \mathbb{N}.$
Let $\alpha_1,\dots,\alpha_u$ be $LM(Syz_{f_1})\cap A_d.$
Let $f_{\alpha_1},\dots,f_{\alpha_u} \in A_d$
be such that for all $i,$  $f_{\alpha_i}\in Ker(v)$
and $LM(f_{\alpha_i})=\alpha_i.$
Then, written in the basis $U_d=\left( LM(Syz_{f_1})\cap A_d \right) \cup \left( NS(Syz_{f_1})\cap A_d \right),$
$f_{\alpha_1},\dots,f_{\alpha_u}$ is under
(row)-echelon form.
Let $g_{\alpha_1},\dots,g_{\alpha_u} \in A_d$ be the corresponding
reduced row-echelon form: we have $LM(g_{\alpha_i})=\alpha_i$, 
$g_{\alpha_i} \in Ker(v)$ and the only monomial 
of $g_{\alpha_i}$ not in $NS(Syz_{f_1})$
is $\alpha_i.$
Now, clearly, if $f\in A_d,$ by reduction by the $g_{\alpha_i}$'s,
there exists
some $g \in A_d$ such that $v(f)=v(g)$
and $LM(g) \in NS(Syz_{f_1}).$  
\end{proof}
\begin{rmk}
Clearly, $LM(I') \subset LM(Syz_{f_1}),$ and this is an equality if $f_1$ 
is not a zero-divisor in $A/I'.$
\end{rmk}

We can now proceed to define the notion of signature.
It relies on a special order on the monomials
of $A$, which is not a monomial order:

\begin{deftn}
Let $x^\alpha$ and $x^\beta$ be monomials in $T.$
We write that $x^\alpha \leq_{sign} x^\beta$ if:
\begin{enumerate}
\item if $\vert x^\alpha \vert < \vert x^\beta \vert.$
\item if $\vert x^\alpha \vert = \vert x^\beta \vert,$ 
$x^\alpha \in NS(Syz_{f_1})$
and $x^\beta \notin NS(Syz_{f_1}).$
\item if $\vert x^\alpha \vert = \vert x^\beta \vert,$
 $x^\alpha, x^\beta \in NS(Syz_{f_1})$ and $x^\alpha \leq x^\beta$
\item if $\vert x^\alpha \vert = \vert x^\beta \vert,$
$x^\alpha, x^\beta \notin NS(Syz_{f_1})$ and $x^\alpha \leq x^\beta.$ 
\end{enumerate}
\end{deftn}
\begin{prop}
$\leq_{sign}$ defines a total order on $T.$
It is degree-refining.
At a given degree, any $x^\alpha$ in $LM(I')$ or $LM(Syz_{f_1})$ is 
bigger than any $x^\beta \notin LM(Syz_{f_1}).$
\end{prop}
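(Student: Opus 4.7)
The plan is to verify the three assertions in order: totality, degree-refining, and the comparison of monomials lying in $LM(I')\cup LM(Syz_{f_1})$ against monomials outside $LM(Syz_{f_1})$.

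For totality, I would first note that any two monomials $x^\alpha, x^\beta \in T$ fall into exactly one of the following mutually exclusive situations: either their total degrees differ (handled by clause (1)), or their total degrees agree, in which case the pair $(x^\alpha, x^\beta)$ lies in exactly one of $NS(Syz_{f_1})^2$, $LM(Syz_{f_1})^2$, or the mixed case. Clauses (3) and (4) compare pairs of the first two types by the underlying tropical term order $\leq$, which is itself total, and clause (2) settles the mixed case. Reflexivity is clear (each clause reduces to $x^\alpha \leq x^\alpha$). For antisymmetry, if $x^\alpha \leq_{sign} x^\beta$ and $x^\beta \leq_{sign} x^\alpha$ then the degrees coincide and both monomials lie on the same side of the $NS(Syz_{f_1})$ partition, so we are in clauses (3) or (4) and antisymmetry of $\leq$ finishes the job. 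For transitivity I would do a short case analysis on the clause witnessing each of $x^\alpha \leq_{sign} x^\beta$ and $x^\beta \leq_{sign} x^\gamma$: a strict degree increase in either step dominates (via clause (1)), and if all three monomials share the same degree, then the clause that applies is determined solely by whether $x^\alpha$ and $x^\gamma$ lie in $NS(Syz_{f_1})$, after which transitivity of $\leq$ closes the argument.

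Degree-refining is immediate from clause (1): whenever $|x^\alpha| < |x^\beta|$, the relation $x^\alpha <_{sign} x^\beta$ holds directly by definition, with no reference to the other clauses.

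For the last assertion, I would use the remark just above the definition, namely $LM(I') \subset LM(Syz_{f_1})$. Fix a common degree and let $x^\alpha \in LM(I')\cup LM(Syz_{f_1})$ and $x^\beta \notin LM(Syz_{f_1})$. Then $x^\alpha \notin NS(Syz_{f_1})$ while $x^\beta \in NS(Syz_{f_1})$, so clause (2) applies in the form $x^\beta <_{sign} x^\alpha$. The main (minor) obstacle here is purely notational bookkeeping in the transitivity step; the substantive content is captured entirely by the fact that $T$ is partitioned into $NS(Syz_{f_1})$ and $LM(Syz_{f_1})$ and that $\leq$ is a total order on each piece.
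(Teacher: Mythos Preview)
Your argument is correct. The paper itself offers no proof of this proposition; the statement is followed directly by the next definition, so your case-by-case verification is exactly the routine check the authors left to the reader.
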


We can define naturally $LM_{sign}(g)$ for any $g \in A.$
We should remark that in the general case, $LM_{sign}(g) \neq LM(g).$

\begin{deftn}[Signature]
For $p\in I,$ using the convention that $LM_{sign}(0)=0,$
 we define the signature of $p,$ denoted by $S(p),$ to be
\[S(p)= \min_{\leq_{sign}} \{ LM_{sign}(g_1) \text{ for } g_1 \in A \text{ s.t. } (g_1 f_1-p) \in I' \} .\]
\end{deftn}
\begin{prop}
The signature is well-defined.
\end{prop}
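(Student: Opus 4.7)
The plan is to establish two things: first, that the set
\[
S := \{ LM_{sign}(g_1) : g_1 \in A, \; g_1 f_1 - p \in I' \}
\]
is non-empty, and second, that it admits a minimum under $\leq_{sign}$. Well-definedness of $S(p)$ is precisely the conjunction of these two facts.

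For non-emptiness, I would simply invoke the hypothesis $p \in I = I' + \langle f_1 \rangle$: by definition of $I$ there exist $g_1 \in A$ and $h \in I'$ with $p = g_1 f_1 + h$, so at least one $g_1$ witnesses the condition $g_1 f_1 - p \in I'$. In the extreme case $p \in I'$ one may take $g_1 = 0$, and the convention $LM_{sign}(0) = 0$ shows $0 \in S$; this handles the boundary case cleanly.

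For the existence of a minimum, the key point I would emphasise is that $\leq_{sign}$ is degree-refining. Since $p$ is homogeneous and $I'$ is a homogeneous ideal (generated by the homogeneous tropical Gröbner basis $G'$), for any candidate $g_1$ with $g_1 f_1 - p \in I'$ we can replace $g_1$ by its homogeneous component in degree $|p| - |f_1|$ without breaking the relation. Hence, up to this replacement, every candidate $g_1$ lives in the finite-dimensional vector space $A_{|p|-|f_1|}$, and so $LM_{sign}(g_1)$ ranges over a subset of the finite set of monomials of degree $|p|-|f_1|$. A non-empty finite set under a total order always has a minimum, so $S$ admits a $\leq_{sign}$-minimum.

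The only obstacle I foresee is the degree-boundedness argument, and it is fully resolved by homogeneity: without a uniform bound on $\deg(g_1)$, the degree-refining nature of $\leq_{sign}$ would not by itself guarantee a minimum (an infinite descending chain of degrees could occur). Once homogeneity is invoked to truncate $g_1$ to a single fixed degree, the rest is essentially formal: non-emptiness from $p \in I$, and the minimum from finiteness of monomials in a fixed degree combined with totality of $\leq_{sign}$ established in the preceding proposition.
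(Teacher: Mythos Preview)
The paper states this proposition without proof, so there is no argument to compare against; your proof correctly supplies what the paper omits.

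Two minor remarks. First, your worry about an ``infinite descending chain of degrees'' is unfounded even without homogeneity: monomial degrees lie in $\mathbb{Z}_{\geq 0}$, which is well-ordered, so any non-empty subset of $T \cup \{0\}$ automatically has a $\leq_{sign}$-minimum (restrict to the finitely many monomials of least occurring degree and use totality of $\leq_{sign}$). Your homogeneity-based truncation is valid and arguably cleaner, but it is not strictly needed for existence of the minimum. Second, the definition of $S(p)$ is stated for arbitrary $p \in I$, whereas your truncation step assumes $p$ homogeneous. In the paper's setting this is the only case that matters (all polynomials handled are homogeneous), but if one wanted the proposition in full generality the well-ordering observation above covers it immediately.
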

\begin{rmk}
Clearly, $p \in I'$ if and only if $S(p)=0.$
Similarly, if $f_1 \notin I'$ then $S(f_1)=1.$
\end{rmk}
\begin{rmk}
This definition is an extension to the tropical case 
of that of \cite{JCFnoJNCF}.
For trivial valuation, it coincides (after projection on last component) with  that of \cite{Arri-Perry, F5} for elements in $I \setminus I'.$ We have modified
it to ensure that the signature takes value in 
$ \left( T \setminus LM(I') \right) \cup \left\lbrace 0 \right\rbrace,$
 see Prop. \ref{prop:im_of_sign} below.  
\end{rmk}
With the fact that we can decompose an equality by degree, we have the following lemma:
\begin{lem}
If $p \in I \setminus I'$ is homogeneous of degree $d,$ then 
$\deg ( S(p))=\deg (p)-\deg (f_1).$ \label{lem:deg_of_sign}
\end{lem}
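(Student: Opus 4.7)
Set $e=\deg(f_1)$. The goal is to show the minimum in the definition of $S(p)$ is attained by some $g_1$ that is homogeneous of degree $d-e$, from which the conclusion is immediate since $\leq_{sign}$ is degree-refining.

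The plan is to start from any $g_1\in A$ with $g_1 f_1 - p \in I'$ (such a $g_1$ exists because $p\in I$) and extract its degree-$(d-e)$ homogeneous component. Write $g_1=\sum_i g_{1,i}$ with $g_{1,i}\in A_i$. Because $G'$ consists of homogeneous polynomials, $I'$ is a homogeneous ideal, so the degree-$j$ component of $g_1 f_1 - p$ lies in $I'$ for every $j$. Since $p$ is homogeneous of degree $d$ and $g_{1,i}f_1$ is homogeneous of degree $i+e$, the degree-$d$ component of $g_1 f_1 - p$ equals $g_{1,d-e}f_1 - p$, so $g_{1,d-e}f_1-p\in I'$. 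Hence $g_{1,d-e}$ is itself a valid candidate in the minimum defining $S(p)$.

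Next I would check the two easy sanity points. First, $g_{1,d-e}\ne 0$: otherwise $-p\in I'$, contradicting $p\notin I'$; so in particular $S(p)$ has a well-defined degree equal to $\deg(LM_{sign}(g_{1,d-e}))=d-e$. Second, passing from $g_1$ to $g_{1,d-e}$ can only decrease the signature: every monomial of $g_{1,d-e}$ is a monomial of $g_1$, while every monomial of any $g_{1,i}$ with $i>d-e$ has strictly larger $\leq_{sign}$-degree than any monomial of $g_{1,d-e}$ by the degree-refining property; thus $LM_{sign}(g_{1,d-e})\leq_{sign} LM_{sign}(g_1)$. Consequently the infimum in the definition of $S(p)$ is attained on homogeneous $g_1$'s of degree $d-e$, giving $\deg(S(p))=d-e$.

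The only delicate step is the degree-refining comparison $LM_{sign}(g_{1,d-e})\leq_{sign}LM_{sign}(g_1)$, and in particular ruling out the case where some higher-degree component might yield a smaller signature; this is handled cleanly by the fact that in $\leq_{sign}$, higher total degree always strictly dominates lower total degree, so any $g_1$ with a non-zero component in degree $>d-e$ is strictly worse than its truncation to degree $d-e$. Everything else is just bookkeeping using homogeneity of $I'$.
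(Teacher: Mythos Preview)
Your argument is correct and is exactly the approach the paper has in mind: the paper's own ``proof'' is just the one-line remark that one can decompose an equality by degree, and you have carefully written out what that means (homogeneity of $I'$ lets you replace $g_1$ by its degree-$(d-e)$ component without leaving the set over which the minimum is taken, and the degree-refining property of $\leq_{sign}$ ensures this replacement can only lower $LM_{sign}$). One small simplification: the inequality $LM_{sign}(g_{1,d-e})\leq_{sign}LM_{sign}(g_1)$ follows immediately from your first observation that the monomials of $g_{1,d-e}$ form a subset of those of $g_1$; the separate discussion of higher-degree components is not needed.
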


\begin{prop} \label{prop:im_of_sign}
For any $p \in I \setminus I',$ $S(p) \in NS( Syz_{f_1}).$
\end{prop}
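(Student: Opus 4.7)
The plan is to argue by contradiction using the optimizer $g_1$ realizing the minimum in the definition of $S(p)$, and to reduce away its top $\leq_{sign}$-monomial by means of a syzygy, contradicting minimality.

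First I would normalize the setup. Since $p \in I \setminus I'$ is (implicitly) homogeneous of degree $d$, and since $I'$ is homogeneous, I can take $g_1$ homogeneous of degree $d-\deg(f_1)$ among those achieving $S(p) = LM_{sign}(g_1)$: any non-homogeneous $g_1$ can be replaced by its degree-$(d-\deg f_1)$ component, since $g_1 f_1 - p$ is homogeneous of degree $d$ modulo $I'$, and this truncation can only decrease $LM_{sign}$. Then $S(p)$ is a monomial of degree $d-\deg(f_1)$, consistent with Lemma \ref{lem:deg_of_sign}. Note also that $S(p) \neq 0$: if $S(p) = 0$ then $g_1 = 0$ gives $p \in I'$, contrary to the hypothesis.

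Now assume, aiming at a contradiction, that $S(p) \in LM(Syz_{f_1})$. The key tool is the reduced row-echelon construction from the proof of Proposition \ref{prop:rem_syz}: at degree $d-\deg(f_1)$, there exists $g_{S(p)} \in Syz_{f_1} \cap A_{d-\deg(f_1)}$ whose leading monomial (with coefficient one, after rescaling) is $S(p)$ and whose remaining monomials all lie in $NS(Syz_{f_1})$. Let $a$ be the coefficient of $S(p)$ in $g_1$ and set $g_1' = g_1 - a\,g_{S(p)}$. Then $g_1' \in A_{d-\deg(f_1)}$ and $g_1' f_1 - p = (g_1 f_1 - p) - a(g_{S(p)} f_1) \in I'$, since $g_{S(p)} f_1 \in I'$ by definition of $Syz_{f_1}$.

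It remains to check that $LM_{sign}(g_1') <_{sign} S(p)$, which contradicts the minimality in the definition of $S(p)$. All monomials of $g_1'$ have degree $d-\deg(f_1)$. The monomial $S(p)$ has been cancelled. Every other monomial of $g_1$ was $\leq_{sign} S(p)$ by definition of $LM_{sign}(g_1)$, and any equality would force the monomial to equal $S(p)$, so all surviving monomials from $g_1$ are strictly less under $\leq_{sign}$. The newly introduced monomials coming from $a\,g_{S(p)}$ (other than $S(p)$ itself) lie in $NS(Syz_{f_1})$; since $S(p) \in LM(Syz_{f_1})$ and they share the same degree, the definition of $\leq_{sign}$ puts them strictly below $S(p)$. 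Thus every monomial of $g_1'$ is $<_{sign} S(p)$, giving the desired contradiction. The main subtle point, which I would highlight, is the use of the reduced echelon form of the syzygies: without it, a naive cancellation of $S(p)$ could reintroduce monomials in $LM(Syz_{f_1})$ that are $\leq$-smaller but not $\leq_{sign}$-smaller, and the argument would collapse.
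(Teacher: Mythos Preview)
Your argument is correct and is exactly the unpacking of the paper's one-line proof, which simply reads ``This is a direct consequence of Proposition~\ref{prop:rem_syz}.'' You make explicit what that sentence hides: the reduced echelon syzygies $g_{\alpha_i}$ built in the proof of Proposition~\ref{prop:rem_syz} let you cancel a putative $LM_{sign}(g_1)\in LM(Syz_{f_1})$ while only introducing monomials in $NS(Syz_{f_1})$, which lie strictly below for $\leq_{sign}$; this contradicts minimality. Your emphasis on why the \emph{reduced} form is needed is well placed and is precisely the content the paper leaves implicit.
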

\begin{proof}
This is a direct consequence of 
Proposition \ref{prop:rem_syz}.
\end{proof}
This proposition can be a little refined.

\begin{lem} \label{lem:sign_comp_mult}
Let $t \in T,$ $f \in I \setminus I'$ then
\begin{eqnarray*}
S(tf)=tS(f) & \Leftrightarrow & tS(f) \in NS(Syz_{f_1}) \\
S(tf) <_{sign}tS(f)& \Leftrightarrow & tS(f) \in LM(Syz_{f_1}). \\ 
\end{eqnarray*}
\end{lem}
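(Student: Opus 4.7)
The plan is to compute $S(tf)$ explicitly from a carefully chosen representative, since the subtlety is that $\leq_{sign}$ is not multiplicative: I cannot simply conclude $LM_{sign}(tg) = t \cdot S(f)$ from $LM_{sign}(g) = S(f)$. First I would fix a homogeneous $g \in A$ of degree $d_0 := \deg(S(f))$ with $g f_1 - f \in I'$ and $LM_{sign}(g) = S(f)$. By Proposition~\ref{prop:im_of_sign}, $S(f) \in NS(Syz_{f_1})$, and since any monomial of degree $d_0$ lying in $LM(Syz_{f_1})$ would be strictly above $S(f)$ under $\leq_{sign}$, the entire support of $g$ lies in $NS(Syz_{f_1})$; in particular every $\alpha \in \text{supp}(g)$ satisfies $\alpha \leq S(f)$ in the (multiplicative) tropical order, hence $t\alpha \leq tS(f)$.

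Next, set $h_0 := tg$, so $h_0 f_1 - tf = t(gf_1 - f) \in I'$, and reduce $h_0$ modulo $Syz_{f_1}$ using the reduced row-echelon syzygy generators at degree $d_0 + |t|$ built in the proof of Proposition~\ref{prop:rem_syz}. Each reduction kills a monomial $t\beta \in LM(Syz_{f_1}) \cap \text{supp}$ and introduces only monomials of $NS(Syz_{f_1})$ strictly below $t\beta$ in the tropical order, all the while preserving the congruence $\cdot f_1 \equiv tf \pmod{I'}$ since the syzygy generators lie in $\ker(v)$. The result is some $h \in A_{d_0 + |t|}$ with $\text{supp}(h) \subset NS(Syz_{f_1})$. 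If $tS(f) \in NS(Syz_{f_1})$, then the $tS(f)$-entry is never targeted by a reduction and no reduction byproduct can equal it (all byproducts are strictly below some $t\beta \leq tS(f)$), so its coefficient survives as the nonzero coefficient of $S(f)$ in $g$; thus $LM_\leq(h) = tS(f)$, and since $\text{supp}(h) \subset NS(Syz_{f_1})$ this gives $LM_{sign}(h) = tS(f)$. If instead $tS(f) \in LM(Syz_{f_1})$, the $tS(f)$-term is eliminated, every surviving monomial is strictly below $tS(f)$ tropically, and combined with $\text{supp}(h) \subset NS(Syz_{f_1})$ this yields $LM_{sign}(h) <_{sign} tS(f)$ because $NS$-monomials are below $LM(Syz)$-monomials of the same degree.

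Finally I would identify $S(tf)$ with $LM_{sign}(h)$. The inequality $S(tf) \leq_{sign} LM_{sign}(h)$ is immediate from the fact that $h$ is a candidate in the minimum defining $S(tf)$. For the reverse, any other candidate $h'$ satisfies $h' - h \in Syz_{f_1}$; writing $h' = h + s$, if $s \neq 0$ then its tropical leading monomial lies in $LM(Syz_{f_1})$ and cannot be cancelled against any monomial in $\text{supp}(h) \subset NS(Syz_{f_1})$ of the same degree, so it survives in $h'$ and forces $LM_{sign}(h') >_{sign} LM_{sign}(h)$; any higher-degree component of $h'$ is handled by degree-refinement. Minimizing over $h'$ gives $S(tf) = LM_{sign}(h)$, and the two cases above translate verbatim into the claimed equivalences. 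The main obstacle is precisely the nonmultiplicativity of $\leq_{sign}$: the whole argument rests on reducing everything into $NS(Syz_{f_1})$-supported form, where $\leq_{sign}$ collapses to the multiplicative tropical order, and tracking that the $tS(f)$-coefficient is preserved in Case A but destroyed in Case B.
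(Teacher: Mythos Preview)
Your argument is correct and follows essentially the same route as the paper: pick a multiplier $g$ realising $S(f)=\sigma$, multiply by $t$, reduce the resulting multiplier modulo the syzygy module, and check whether the $t\sigma$-coefficient survives. The paper's proof is very terse (and its phrase ``normal form (modulo $G'$) of $\alpha t\sigma f_1 + tgf_1 + th$'' appears to be a slip for the normal form of the \emph{multiplier} $\alpha t\sigma + tg$ modulo $Syz_{f_1}$); you have expanded exactly the steps it leaves implicit, in particular the observation that $\mathrm{supp}(g)\subset NS(Syz_{f_1})$, the use of the reduced echelon syzygies from Proposition~\ref{prop:rem_syz}, and the minimality argument showing $S(tf)=LM_{sign}(h)$ rather than merely $S(tf)\le_{sign} LM_{sign}(h)$.

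One small notational caution: when you write ``$LM_\leq(h)=tS(f)$'' you should mean the $\leq$-maximum of the \emph{support} of $h$ viewed as coefficient-$1$ monomials, not the paper's $LM(h)$ (which is the monomial of the leading \emph{term} and depends on valuations of coefficients). What you actually need, and what your argument establishes, is that $t\sigma$ is the $\leq_{sign}$-largest element of $\mathrm{supp}(h)$; since $\mathrm{supp}(h)\subset NS(Syz_{f_1})$ this is the $\leq$-largest monomial in the support, and your tracking of byproducts (all strictly below the reduction target, itself strictly below $t\sigma$ in Case~A) gives exactly that. With this reading your deduction of $LM_{sign}(h)=t\sigma$ is sound.
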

\begin{proof}
Let $S(f)=\sigma.$
By definition of $S,$ we have $f=\alpha \sigma f_1+gf_1+h$
with $\alpha \in k^*=k \setminus \{0 \},$ $LM(g)<_{sign} \alpha \sigma$ and $h \in I'.$
If $t \sigma \in NS(Syz_{f_1})$, then we get directly that the leading term of the normal form 
(modulo $G'$) of 
$\alpha t\sigma f_1+tgf_1+th$ is $t\sigma$ and in this case $S(tf)=t \sigma=tS(f),$ otherwise we can provide
a syzygy.
In the other case, the normal form has a strictly smaller leading term, and we get that $S(tf) <_{sign}tS(f).$
\end{proof}

We then have the following property, and two last lemmas to understand the behaviour of signature.

\begin{prop} \label{prop:phi_bijection}
The following mapping $\Phi$ is a \textbf{bijection}:      
\begin{equation*}
      \Phi : \: \:      
     \begin{array}{ccl}
      LM(I) \setminus LM(I') & \rightarrow & NS(Syz_{f_1})\\
      x^\alpha & \mapsto & \min_{\leq_{sign}} \{ S(x^\alpha+g), \: \text{ with } g \text{ s.t. }  \\
      &&x^\alpha+g \in I \text{ and } LT(g)<x^\alpha\},\\
     \end{array}
\end{equation*}
\end{prop}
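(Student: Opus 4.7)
The plan is to show that $\Phi$ is well defined and degree-preserving, that its source and target have equal finite cardinality in each graded piece, and finally that $\Phi$ is injective; combining these forces bijectivity. For well-definedness, fix $x^\alpha \in LM(I) \setminus LM(I')$: the set over which we minimize is non-empty because $x^\alpha \in LM(I)$, and every $p = x^\alpha + g$ in this set lies in $I \setminus I'$ since $LM(p) = x^\alpha \notin LM(I')$, so Proposition~\ref{prop:im_of_sign} places $S(p)$ in $NS(Syz_{f_1})$. By Lemma~\ref{lem:deg_of_sign}, $\Phi$ restricts to a map between the degree-$d$ part of $LM(I) \setminus LM(I')$ and the degree-$(d-\deg f_1)$ part of $NS(Syz_{f_1})$ for each $d$, and the short exact sequence of graded vector spaces
\[
0 \longrightarrow Syz_{f_1,\,d-\deg f_1} \longrightarrow A_{d-\deg f_1} \stackrel{v}{\longrightarrow} (I/I')_d \longrightarrow 0
\]
yields the dimension equality $|NS(Syz_{f_1}) \cap A_{d-\deg f_1}| = \dim(I/I')_d = |(LM(I) \setminus LM(I')) \cap A_d|$, so source and target are equipotent in every degree.

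For injectivity I would suppose $\Phi(x^\alpha) = \Phi(x^\beta) = \sigma$ with $x^\alpha \neq x^\beta$. Choose minimizers $p_\alpha = x^\alpha + g_\alpha$ and $p_\beta = x^\beta + g_\beta$, and write $p_i = G_i f_1 - h_i$ with $h_i \in I'$ and $LM_{sign}(G_i) = \sigma$; let $c_i \in k^*$ denote the coefficient of $\sigma$ in $G_i$. Then $q := c_\beta p_\alpha - c_\alpha p_\beta$ lies in $I$, and because the $\sigma$-terms of $c_\beta G_\alpha - c_\alpha G_\beta$ cancel, $S(q) <_{sign} \sigma$. If $q \in I'$, then $c_\beta p_\alpha \equiv c_\alpha p_\beta \pmod{I'}$, so $c_\beta$ times the tropical normal form of $p_\alpha$ modulo $G'$ coincides with $c_\alpha$ times that of $p_\beta$; but these normal forms have tropical leading monomials $x^\alpha$ and $x^\beta$ respectively, contradicting $x^\alpha \neq x^\beta$. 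Otherwise $q \notin I'$; since adding elements of $I'$ leaves $S$ unchanged (directly from the definition), reducing $q$ modulo $G'$ preserves its signature and yields $r \in I \setminus I'$ with $LM(r) \in LM(I) \setminus LM(I')$ and $S(r) <_{sign} \sigma$. Provided $LM(r) \in \{x^\alpha, x^\beta\}$, the normalization of $r$ by its leading coefficient exhibits a strictly smaller signature in the set defining $\Phi(LM(r)) = \sigma$, a contradiction.

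The main obstacle is ruling out the possibility of tropical cancellation among the $x^\alpha$ and $x^\beta$ coefficients of $q$, which could in principle push $LM(q)$ strictly below both $x^\alpha$ and $x^\beta$ through coincidences in the valuations of $c_\alpha, c_\beta$ and of the $x^\alpha$-coefficient of $g_\beta$ and the $x^\beta$-coefficient of $g_\alpha$. I would handle this by preparing $p_\alpha, p_\beta$ to have no cross coefficient (no $x^\beta$ term in $p_\alpha$ and no $x^\alpha$ term in $p_\beta$), either by replacing each with a suitable $k$-linear combination of the two, or by passing to canonical tropically reduced lifts supported on the leading monomial together with $NS(I)$; with this preparation the coefficients of $x^\alpha$ and $x^\beta$ in $q$ are literally $c_\beta$ and $-c_\alpha$, no cancellation can occur, and a comparison of tropical values forces $LM(q) \in \{x^\alpha, x^\beta\}$, completing the argument.
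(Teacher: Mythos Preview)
Your route differs from the paper's: there the proposition is dispatched in one line by reading the bijection off a tropical row-echelon form of a suitable Macaulay matrix (Definition~\ref{def:trop_row_echelon_form}). Your dimension count via the short exact sequence is correct, and the injectivity strategy is sound in outline, but the ``preparation'' step you propose to defeat the cancellation obstacle is not adequately justified. Your second option --- replacing $p_\alpha,p_\beta$ by the canonical lifts supported on $\{x^\alpha\}\cup NS(I)$ and $\{x^\beta\}\cup NS(I)$ --- does kill the cross terms, but there is no reason these lifts still have signature~$\sigma$: passing from a minimiser $p_\alpha$ to $\tilde p_\alpha$ means subtracting elements of $I$ with various leading monomials $x^\gamma\in LM(I)\setminus\{x^\alpha\}$, and $\Phi(x^\gamma)$ may well exceed $\sigma$, so by Lemma~\ref{lem:addition_signature} the signature of $\tilde p_\alpha$ can rise above~$\sigma$. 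Your first option (linear combinations in the span of $p_\alpha,p_\beta$) does work, but it is not automatic: writing $\lambda,\nu$ for the cross coefficients, you must check that the $2\times2$ system is nonsingular and that the resulting polynomials retain leading monomials $x^\alpha,x^\beta$ and signature~$\sigma$. The key unstated point is that $\lambda x^\beta<x^\alpha$ and $\nu x^\alpha<x^\beta$ combine, via the ultrametric inequality, to force $\val(\lambda\nu)>0$, so $1-\lambda\nu$ is a unit and all the verifications go through.

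A much shorter injectivity argument --- essentially what the paper's row-echelon hint encodes --- is to observe that $\Phi^{-1}(\sigma)=LM(I_{d,\leq\sigma})\setminus LM(I_{d,<\sigma})$, hence $\lvert\Phi^{-1}(\sigma)\rvert=\dim(I_{d,\leq\sigma}/I_{d,<\sigma})$; since this quotient is generated by the class of $\sigma\cdot f_1$, it has dimension at most~$1$.
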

\begin{proof}
It can be proved directly by computing a tropical row-echelon form of some
Macaulay matrix (see Definition \ref{def:trop_row_echelon_form}).
\end{proof}


\begin{lem} \label{lem:baisse_signature}
Let $f_1,f_2 \in I \setminus I'$ be such that $LM(f_1) > LM(f_2)$ and $S(f_1)=S(f_2)=\sigma.$
Then there exist $\alpha, \beta \in k^*$ such that
\[S(\alpha f_1 + \beta f_2) <_{sign} \sigma. \]
\end{lem}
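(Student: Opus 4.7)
The plan is to unpack the definition of signature, extract the $\sigma$-coefficients of the witnesses of $f_1$ and $f_2$, and then choose a $k^*$-combination that cancels them. To avoid the notational collision with the fixed generator (which the setup also calls $f_1$), let $\phi$ denote that fixed polynomial whose ideal we are enlarging. By the definition of $S$, there exist $h_1,h_2 \in A$ with $(h_i \phi - f_i) \in I'$ and $LM_{sign}(h_i) = \sigma$ for $i=1,2$. Since $\leq_{sign}$ is a total order, we may write $h_i = c_i\sigma + h_i'$ with $c_i \in k^*$ and $LM_{sign}(h_i') <_{sign} \sigma$.

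Set $\alpha = c_2 \in k^*$ and $\beta = -c_1 \in k^*$, and define $h := \alpha h_1 + \beta h_2 = c_2 h_1' - c_1 h_2'$; the two $\sigma$-coefficients cancel by construction. Hence either $h = 0$, in which case $LM_{sign}(h) = 0 <_{sign}\sigma$ by convention, or $LM_{sign}(h)$ appears among the monomials of $h_1'$ or $h_2'$, all strictly below $\sigma$ in $\leq_{sign}$. A direct computation gives
\[
h\,\phi - (\alpha f_1 + \beta f_2) \;=\; \alpha(h_1\phi - f_1) + \beta(h_2\phi - f_2) \;\in\; I',
\]
so $h$ is an admissible witness for the signature of $\alpha f_1 + \beta f_2$. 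Since $S$ is the $\leq_{sign}$-minimum over all such witnesses, $S(\alpha f_1 + \beta f_2) \leq_{sign} LM_{sign}(h) <_{sign} \sigma$, which is the claim.

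Two small points of care. First, the hypothesis $LM(f_1) > LM(f_2)$ plays no role in the cancellation above; it merely guarantees $\alpha f_1 + \beta f_2 \neq 0$ (otherwise $f_1,f_2$ would be $k^*$-proportional and share a leading monomial), which is cosmetically reassuring but not needed since the convention $LM_{sign}(0)=0$ handles the degenerate case anyway. Second, one should verify that the $\sigma$-term in $h_i$ really is a single, well-defined top coefficient: this is immediate because $\leq_{sign}$ is total, so ``$LM_{sign}(h_i) = \sigma$'' forces $\sigma$ to appear in $h_i$ with a unique nonzero coefficient $c_i$ and all other monomials of $h_i$ to be strictly smaller. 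I expect no genuine obstacle here; the only thing to watch is keeping the two meanings of ``$f_1$'' separated in the bookkeeping, after which the proof reduces to a one-line linear cancellation on the signature witnesses.
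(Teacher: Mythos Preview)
Your proof is correct. The paper states this lemma without proof, so there is nothing to compare against; your argument---extract witnesses $h_i$ achieving $LM_{sign}(h_i)=\sigma$, cancel the $\sigma$-coefficients, and observe that the resulting $h$ still witnesses $\alpha f_1+\beta f_2$ modulo $I'$---is exactly the intended one-line linear algebra, and your remarks on the role of the hypothesis $LM(f_1)>LM(f_2)$ and on the totality of $\leq_{sign}$ are accurate.
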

\begin{lem} \label{lem:addition_signature}
Let $f_1,f_2 \in I$ be such that $S(f_1)>S(f_2).$
Then for any $\alpha \in k^*,\beta \in k,$ $S(\alpha f_1 + \beta f_2) = S(f_1).$
\end{lem}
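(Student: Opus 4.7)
The plan is to prove $S(\alpha f_1 + \beta f_2) = S(f_1)$ by establishing both $\leq_{sign}$-inequalities. To sidestep the notational clash with the polynomial generating $I$ over $I'$, let me temporarily write $F_1$ for that polynomial and reserve $f_1, f_2$ for the lemma's inputs. By definition of the signature, fix witnesses $g_1, g_2 \in A$ with $g_i F_1 - f_i \in I'$ and $LM_{sign}(g_i) = S(f_i)$ for $i = 1, 2$, and note that necessarily $S(f_1) \neq 0$: otherwise $f_1 \in I'$ would force $S(f_2) <_{sign} 0$, which is impossible.

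For the upper bound, observe that $\alpha g_1 + \beta g_2$ is a candidate lift of $\alpha f_1 + \beta f_2$ since $(\alpha g_1 + \beta g_2)F_1 - (\alpha f_1 + \beta f_2) = \alpha(g_1F_1 - f_1) + \beta(g_2F_1 - f_2) \in I'$. Because every monomial of $g_2$ is $\leq_{sign} S(f_2) <_{sign} S(f_1)$, the coefficient of the monomial $S(f_1)$ in the sum equals $\alpha$ times its nonzero coefficient in $g_1$, and every other occurring monomial is $<_{sign} S(f_1)$. Hence $LM_{sign}(\alpha g_1 + \beta g_2) = S(f_1)$, and $S(\alpha f_1 + \beta f_2) \leq_{sign} S(f_1)$ follows.

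For the reverse bound, suppose for contradiction that $S(\alpha f_1 + \beta f_2) = \sigma <_{sign} S(f_1)$ and choose a witness $h \in A$ with $hF_1 - (\alpha f_1 + \beta f_2) \in I'$ and $LM_{sign}(h) = \sigma$. Then $\alpha^{-1}(h - \beta g_2)$ is itself a lift of $f_1$, since $\alpha^{-1}(h - \beta g_2)F_1 - f_1 = \alpha^{-1}\bigl(hF_1 - (\alpha f_1 + \beta f_2)\bigr) - \alpha^{-1}\beta(g_2F_1 - f_2) \in I'$. Every monomial of $h$ is $\leq_{sign} \sigma$ and every monomial of $g_2$ is $\leq_{sign} S(f_2)$, both strictly below $S(f_1)$, so every monomial of $h - \beta g_2$ is $<_{sign} S(f_1)$; handling the degenerate case $h - \beta g_2 = 0$ via $LM_{sign}(0) = 0 <_{sign} S(f_1)$, we conclude $LM_{sign}(\alpha^{-1}(h - \beta g_2)) <_{sign} S(f_1)$, contradicting the minimality in the definition of $S(f_1)$. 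The only delicate point is that the $\leq_{sign}$-leading monomial of a sum of polynomials with differing $LM_{sign}$-tops behaves as expected; this follows purely from $\leq_{sign}$ being a total order on monomials, with no monomial-order compatibility needed.
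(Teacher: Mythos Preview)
The paper states this lemma without proof, so there is nothing to compare your argument against on that front. Your proof is correct and is exactly the natural argument one would expect: exhibit $\alpha g_1+\beta g_2$ as a lift to get the upper bound, and show that any strictly smaller lift of the sum would combine with $g_2$ to give a strictly smaller lift of $f_1$ for the lower bound. The renaming $F_1$ for the distinguished generator is a sensible way to dodge the unfortunate overloading of $f_1$ in the lemma statement, and your handling of the boundary cases ($S(f_2)=0$, $h-\beta g_2=0$) is clean. The closing remark that only totality of $\leq_{sign}$ is needed---not multiplicativity---is worth making explicit, since $\leq_{sign}$ is indeed not a monomial order.
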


\section{Tropical $\mathfrak{S}$-Gröbner bases}


The notion of signature allows the definition of a natural filtration of the
vector space $I$ by degree and signature:
\begin{deftn}[Filtration by signature]
For $d \in \mathbb{Z}_{\geq 0}$  and $x^\alpha \in T \cap I_d,$ 
we define the vector space 
\[I_{d, \leq x^\alpha}:=\left\lbrace f \in I_d, \: S(f) \leq x^\alpha \right\rbrace.\]
Then we define the filtration by signature, 
$I=\left( I_{d, \leq x^\alpha} \right)_{d \in \mathbb{Z}_{\geq 0}, \: x^\alpha \in T \cap I_d}.$ \label{def:filtration}
\end{deftn}

Our goal in this section and the following is to define tropical Gröbner bases
that are compatible with this filtration by signature. It relies
on the notion of $\mathfrak{S}$-reduction and irreducibility.

\begin{deftn}[$\mathfrak{S}$-reduction]
Let $f,g \in I,$ $h \in I$ and $\sigma \in T.$
We say that $f$ 
$\mathfrak{S}$\textbf{-reduces} 
to $g$ with respect to $\sigma$ and with $h,$
\[f \rightarrow^h_{\mathfrak{S}, \sigma} g \]
if there are $t \in T$ and $\alpha \in k^*$ such that:
\begin{itemize}
\item $LT(g)<LT(f),$ $LM(g) \neq LM(f)$ and $f-\alpha th=g$ and
\item $S(th)<_{sign} \sigma.$
\end{itemize}
If $\sigma$ is not specified, then we mean $\sigma = S(f).$ 
\end{deftn}


It is then natural to define what is an $\mathfrak{S}$-irreducible polynomial.

\begin{deftn}[$\mathfrak{S}$-irreducible polynomial]
We say that $f \in I$ is $\mathfrak{S}$-irreducible with respect 
to $\sigma \in T,$ or up to $\sigma,$ if there is no $h \in I$ which $\mathfrak{S}$-reduces it
with respect to $\sigma.$
If $\sigma$ is not specified, then we mean $\sigma = S(f).$ If there is no
ambiguity, we might omit the $\mathfrak{S}-.$
\end{deftn}
\begin{rmk}
This definition clearly depends on $I,$ $I',$ and the given monomial ordering.
\end{rmk}

In order to better understand what are $\mathfrak{S}$-irreducible polynomials, we
have the following:

\begin{theo}
Let $f \in I$ and $x^\alpha \in T$ such that:
\begin{itemize}
\item $f$ is $\mathfrak{S}$-irreducible with respect to $x^\alpha,$
\item $f=(c x^\alpha +u)f_1+h$ with $c \in k,$ $u \in A$ with $LT(u) <cx^\alpha,$ $h \in I'.$
\end{itemize}
Then $f=0$ if and only if $x^\alpha \in LM(Syz_{f_1}).$
Moreover, if $f \neq 0,$ then $x^\alpha=S(f)=\Phi (LM(f)).$

\label{thm:alternative_irred}
\end{theo}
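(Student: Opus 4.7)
The plan is to control $S(f)$ from both sides and let $\mathfrak{S}$-irreducibility settle the rest. First I would record the \emph{lower bound}: if $f \neq 0$ and $S(f) <_{sign} x^\alpha$, then taking $h := f$, $t := 1$ and coefficient $1$ in the definition of $\mathfrak{S}$-reduction gives $f \rightarrow^{f}_{\mathfrak{S}, x^\alpha} 0$, since $LT(0) < LT(f)$ and $S(1 \cdot f) = S(f) <_{sign} x^\alpha$; this contradicts the $\mathfrak{S}$-irreducibility of $f$ up to $x^\alpha$. Hence either $f = 0$, or $S(f) \geq_{sign} x^\alpha$.

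Next I would produce a matching \emph{upper bound} from the representation $f = (c x^\alpha + u) f_1 + h$. Using the reduced row-echelon basis $\{g_{\alpha_i}\}$ of $Syz_{f_1} \cap A_d$ built in the proof of Proposition \ref{prop:rem_syz} (with $d = \deg x^\alpha$; each $g_{\alpha_i}$ has $LM(g_{\alpha_i}) = \alpha_i$ and only that one monomial outside $NS(Syz_{f_1})$), I would reduce $cx^\alpha + u$ by the $g_{\alpha_i}$'s, killing one $LM(Syz_{f_1})$-monomial at a time, to arrive at some $g_1 \in A_d$ supported entirely in $NS(Syz_{f_1})$ with $g_1 - (cx^\alpha + u) \in Syz_{f_1}$. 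Then $g_1 f_1 - f \in I'$, so $S(f) \leq_{sign} LM_{sign}(g_1)$. A quick induction (each replacement $x^\beta \mapsto x^\beta - g_\beta$ only introduces monomials in $NS(Syz_{f_1})$ with tropical order strictly less than $x^\beta$) splits into two sub-cases. If $x^\alpha \in LM(Syz_{f_1})$ and $c \neq 0$, then $x^\alpha$ itself is eliminated and every surviving monomial of $g_1$ lies in $NS(Syz_{f_1})$ with tropical order $< x^\alpha$; by clause $(2)$ of the definition of $\leq_{sign}$ we get $LM_{sign}(g_1) <_{sign} x^\alpha$. If instead $x^\alpha \in NS(Syz_{f_1})$ and $c \neq 0$, then $x^\alpha$ is never touched and remains the tropically-largest monomial of $g_1$, so $LM_{sign}(g_1) = x^\alpha$.

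Combining the bounds gives everything. In the first sub-case the lower and upper bounds are contradictory unless $f = 0$, forcing $f = 0$. In the second sub-case they squeeze $S(f) = x^\alpha$; here $f = 0$ is impossible, for then $cx^\alpha + u$ would be a syzygy with leading monomial $x^\alpha$, contradicting $x^\alpha \in NS(Syz_{f_1})$. This gives the ``iff''. For the \emph{moreover}, the equality $S(f) = x^\alpha$ is already in hand. To check $\Phi(LM(f)) = x^\alpha$ I would first argue $LM(f) \notin LM(I')$ (else some $h' \in I'$ with $LM(h') = LM(f)$ has $S(h') = 0 <_{sign} x^\alpha$ and $\mathfrak{S}$-reduces $f$), so $\Phi(LM(f))$ is defined; then the witness $f / LC(f) = LM(f) + g'$ yields $\Phi(LM(f)) \leq_{sign} S(f) = x^\alpha$, while any strictly smaller witness $LM(f) + g''$ would furnish a forbidden $\mathfrak{S}$-reduction $f \rightarrow^{LM(f) + g''}_{\mathfrak{S}, x^\alpha} f - LC(f)(LM(f) + g'')$. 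The main obstacle is really the bookkeeping in paragraph two: one must be careful that the reduction process never reintroduces a monomial $\geq_{sign} x^\alpha$, and in particular that when $x^\alpha \in LM(Syz_{f_1})$ the leading term $cx^\alpha$ is absorbed completely. Once this is in place, the rest is a clean application of the ``reduce-$f$-by-itself'' trick.
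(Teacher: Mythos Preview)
Your argument is correct and arrives at the same conclusion, but the route differs from the paper's. The paper never computes or bounds $S(f)$ directly; instead it uses the bijection $\Phi$ of Proposition~\ref{prop:phi_bijection} as a black box to manufacture reductors. For the implication ``$x^\alpha\in LM(Syz_{f_1})\Rightarrow f=0$'' the paper simply sets $x^\beta=\Phi(LM(f))\in NS(Syz_{f_1})$, observes $x^\beta<_{sign}x^\alpha$ by clause~(2), and takes any $g\in I$ with $LM(g)=LM(f)$, $S(g)=x^\beta$ as a forbidden $\mathfrak{S}$-reductor. For the ``moreover'' clause it argues identically: if $\Phi(LM(f))<_{sign}x^\alpha$, the witness realising $\Phi(LM(f))$ again reduces $f$. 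Your approach replaces the first of these by an explicit squeeze on $S(f)$: the self-reduction $f\rightarrow^f_{\mathfrak{S},x^\alpha}0$ for the lower bound, and the echelon-form reduction of $cx^\alpha+u$ modulo the $g_{\alpha_i}$'s (from the proof of Proposition~\ref{prop:rem_syz}) for the upper bound. This is more constructive and makes visible \emph{why} $S(f)$ lands where it does, at the cost of the bookkeeping you flag. The paper's route is shorter because $\Phi$ packages exactly that bookkeeping. For the final identification $\Phi(LM(f))=x^\alpha$ the two arguments essentially coincide.

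One remark on the point you yourself single out as the obstacle: your claim that in the case $x^\alpha\in NS(Syz_{f_1})$ the monomial $x^\alpha$ ``remains the tropically-largest monomial of $g_1$, so $LM_{sign}(g_1)=x^\alpha$'' conflates largest \emph{term} with $\leq_{sign}$-largest \emph{monomial}. The reduction guarantees $LT(g_1)=cx^\alpha$, but a priori $g_1$ could acquire a monomial $x^\gamma\in NS(Syz_{f_1})$ with $x^\gamma>_{sign}x^\alpha$ carried by a coefficient of large valuation. The paper's ``Then necessarily, $x^\alpha=S(f)$'' glosses over exactly the same point, so your treatment is no less rigorous than the original; but if you want the step to be airtight you should either strengthen the hypothesis on $u$ to $LM_{sign}(u)<_{sign}x^\alpha$, or observe that the \emph{unique} representative $g_1$ supported in $NS(Syz_{f_1})$ is the one that realises $S(f)$, and then argue separately that irreducibility with respect to $x^\alpha$ forces $LM_{sign}(g_1)\le_{sign}x^\alpha$.
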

\begin{proof}
Suppose $f=0,$ then $0=f=(c x^\alpha +u)f_1+h,$ hence $x^\alpha \in LM(Syz_{f_1}).$
We prove the converse result by contradiction. 
We assume that $f \neq 0$ and $x^\alpha \in LM(Syz_{f_1}).$ 
Let $t = LM(f)$ and $x^\beta = \Phi (t) \in NS(Syz_{f_1}).$
We have $x^\beta <_{sign} x^\alpha.$ 
There exists $g \in I$ such that $LM(g)=t$ and $S(g)=x^\beta.$ 
Since $x^\beta < x^\alpha,$ $g$ is an $\mathfrak{S}$-reductor for $f$ with respect to $x^\alpha.$ 
This contradicts the fact that $f$ is irreducible. Hence $f=0.$
For the additional fact when $f \neq 0,$ ssume $x^\alpha \in NS(Syz_{f_1}).$ 
Then necessarily, $x^\alpha = S(f).$ Suppose now that $x^\alpha \neq \Phi (LM(f)).$ 
Then $S(f)>\Phi(LM(f) ).$ Therefore there exists a polynomial $g \in I$ such that
$t=LM(g)=LM(f)$ and $\Phi (t) =S(g)=\Phi(LM(f))<S(f).$
It follows that $g$ is an $\mathfrak{S}$-reductor of $f,$ which leads to
$f$ being not $\mathfrak{S}$-irreducible.
\end{proof}

\begin{cor}
If $g \in I$ and $x^\alpha \in T$ are such that $x^\alpha g \neq 0$ is $\mathfrak{S}$-irreducible
up to $x^\alpha S(g)$
then $S(x^\alpha g)=x^\alpha S(g).$ \label{cor:sign_prod_irred}
\end{cor}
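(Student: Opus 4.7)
The plan is to apply Theorem \ref{thm:alternative_irred} to $f := x^\alpha g$ with the theorem's ``$x^\alpha$'' instantiated as $x^\alpha S(g)$. The trivial case $g \in I'$ gives $S(g) = 0$, $x^\alpha g \in I'$, and $S(x^\alpha g) = 0 = x^\alpha S(g)$, so I assume $g \in I \setminus I'$ and put $\sigma := S(g)$; by Proposition \ref{prop:im_of_sign}, $\sigma \in NS(Syz_{f_1})$, and by Lemma \ref{lem:deg_of_sign} I may choose a witness $g_1 \in A$ homogeneous of degree $|\sigma|=\deg(g)-\deg(f_1)$ satisfying $g_1 f_1 - g \in I'$ and $LM_{sign}(g_1) = \sigma$.

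Next I verify the decomposition required by Theorem \ref{thm:alternative_irred}. Write $g_1 = c\sigma + v$ with $c \in k^*$. Because $\sigma \in NS(Syz_{f_1})$ and $g_1$ is homogeneous of degree $|\sigma|$, every monomial $x^\gamma$ of $v$ must itself lie in $NS(Syz_{f_1})$: otherwise rule~(2) of the sign order would give $x^\gamma >_{sign} \sigma$, contradicting the fact that every monomial of $v$ is $<_{sign} \sigma$. Hence all monomials of $g_1$ belong to $NS(Syz_{f_1})$ at a common degree, and on such monomials the sign order reduces to the tropical order by rule~(3); in particular $\sigma$ is also the tropical-leading monomial of $g_1$, and $LT(v) < c\sigma$ in the tropical term order. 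Multiplying $g_1 f_1 = g + h$ (with $h \in I'$) by $x^\alpha$ and using compatibility of the tropical order with monomial multiplication produces
\[ x^\alpha g \;=\; \bigl(c\, x^\alpha \sigma + x^\alpha v\bigr)\, f_1 + x^\alpha h, \]
with $x^\alpha h \in I'$ and $LT(x^\alpha v) < c \cdot x^\alpha \sigma$.

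Both hypotheses of Theorem \ref{thm:alternative_irred} are now in place for $f = x^\alpha g$ with theorem-parameter $x^\alpha \sigma$: $\mathfrak{S}$-irreducibility with respect to $x^\alpha S(g) = x^\alpha \sigma$ is the corollary's assumption, and the decomposition above supplies the rest. Since $x^\alpha g \neq 0$, the theorem concludes $x^\alpha \sigma \in NS(Syz_{f_1})$ and $S(x^\alpha g) = x^\alpha \sigma = x^\alpha S(g)$, as claimed. The delicate step is the second paragraph: optimality of $g_1$ together with $\sigma \in NS(Syz_{f_1})$ is exactly what forces every monomial of $g_1$ into $NS(Syz_{f_1})$, making the sign and tropical orders agree on them and securing the tropical inequality $LT(x^\alpha v) < c\, x^\alpha \sigma$ demanded by Theorem \ref{thm:alternative_irred}.
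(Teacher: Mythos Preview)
Your overall strategy---instantiate Theorem~\ref{thm:alternative_irred} with $f=x^\alpha g$ and parameter $x^\alpha S(g)$---is the intended one, and your argument that every monomial of the minimizing witness $g_1$ lies in $NS(Syz_{f_1})$ is correct and nicely observed. The gap is the next step: from ``all monomials of $g_1$ lie in $NS(Syz_{f_1})$ and $\sigma=LM_{sign}(g_1)$'' you infer that $\sigma$ is also the \emph{tropical} leading monomial of $g_1$ and hence $LT(v)<c\sigma$. Rule~(3) only says that on $NS(Syz_{f_1})$ in a fixed degree the sign order agrees with the tropical comparison of \emph{monomials with coefficient~$1$}; the tropical leading term of $g_1$ depends on the valuations of the actual coefficients, which need not be~$0$. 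Concretely, take $k=\Q_2$, $w=(0,0)$, $\leq_1$ lex with $X>_1Y$, $I'=0$, $f_1=1$, and $g_1=X+2^{-1}Y$: all monomials are in $NS(Syz_{f_1})=T$, $LM_{sign}(g_1)=X$, but $LT(g_1)=2^{-1}Y$ since $\val(2^{-1})=-1<0$; so $LT(v)=2^{-1}Y\not<c\sigma=X$. Thus the second bullet of Theorem~\ref{thm:alternative_irred} cannot be verified this way.

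A clean repair: inspect the proof of Theorem~\ref{thm:alternative_irred} and notice that the implication you actually need---``if $f\neq 0$ is $\mathfrak{S}$-irreducible with respect to $x^\alpha$, then $x^\alpha\in NS(Syz_{f_1})$''---is established there using only the irreducibility hypothesis, not the decomposition. Applying this to $f=x^\alpha g$ with parameter $x^\alpha S(g)$ gives $x^\alpha S(g)\in NS(Syz_{f_1})$, and then Lemma~\ref{lem:sign_comp_mult} immediately yields $S(x^\alpha g)=x^\alpha S(g)$. This avoids the tropical-vs-sign-order mismatch entirely.
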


The previous results show that the right notion of 
$\mathfrak{S}$-irreducibility for $f$ a polynomial 
is up to $S(f).$
Nevertheless, the previous corollary can not be used for 
easy computation of the signatures of irreducible polynomials
as we can see on the following example:

\begin{ex} \label{exp:prob_sign}
We assume that $z^4 \in G'$ and $z^3,x^4$ and $x^2 y^2 \in NS(I').$
We assume that we have $h_1=xy^5+y^5z,$ $h_2 = x^3 y^2 z-y^6$
and $h_3=x^4 y^2+y^6$ such that $S(h_1)=x^2y,$
$S(h_2)= x^3$ and $S(h_3)=z^3$ and all of them are
$\mathfrak{S}$-irreducible. We assume that
$x^4 >_{sign} x^2 y^2$ and $\Phi(x^4y^2z)=x^4.$
Then, $zh_3=yh_1+xh_2=x^4y^2z+y^6z.$
Its signature is $xS(h_2)=x^4,$ and not $z^4.$
With our assumptions, the polynomial $zh_3$ is irreducible
up to $S(zh_3)=x^4,$ whereas up to $z^4,$ it is not.

In other words, it is possible that the polynomial
$x^\alpha f$ is irreducible, up to $S(x^\alpha f),$
 even though $S(x^\alpha f) < x^\alpha S(f).$
\end{ex}

We now have enough definitions to write down the notion of $\mathfrak{S}$-Gröbner bases,
which will be a computational key point for the F5 algorithm.

\begin{deftn}[Tropical $\mathfrak{S}$-Gröbner basis]
We say that $G \subset I$ is a \textbf{tropical} $\mathfrak{S}$\textbf{-Gröbner basis} (or tropical $\mathfrak{S}-$GB, or just $\mathfrak{S}-$GB for short when there is no amibuity) of $I$ with respect to $G',$ $I',$ and a given tropical term order if 
$G'=\{g \in G \text{ s.t. } S(g)=0 \}$ and
for each $\mathfrak{S}$-irreducible polynomial $f \in I \setminus  I',$
there exists $g \in G$ and $t \in T$ such that $LM(tg)=LM(f)$
and $tS(g)=S(f).$ \label{def:S-GB}
\end{deftn}

\begin{rmk} \label{rmk:div_SGB_irred}
Unlike in Arri and Perry's paper \cite{Arri-Perry}, we ask for $tS(g)=S(f)$ 
instead of the weaker condition $S(tg)=S(f).$
The main reason is to avoid misshapes like that of Example \ref{exp:prob_sign}.
We can nevertheless remark that thanks to
Lemma \ref{lem:sign_comp_mult}, then $tS(g)=S(f)$ implies that
$S(tg)=tS(g)=S(f).$
\end{rmk}

We prove in this Section that tropical $\mathfrak{S}$-Gröbner bases are 
tropical Gröbner bases, allowing one of the main ideas of the F5 algorithm:
compute tropical $\mathfrak{S}$\textbf{-Gröbner basis} instead of
tropical Gröbner basis. 

To that intent, we use the following two propositions.

\begin{prop} \label{prop:S-irreduc}
A polynomial $f$ is $\mathfrak{S}$-irreducible iff $S(f)=\Phi (LM(f)).$
\end{prop}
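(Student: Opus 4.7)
My plan is to base both directions on the baseline inequality $\Phi(LM(f)) \leq_{sign} S(f)$, which holds for every $f \in I \setminus I'$. To see this, rescale $f$ so its leading coefficient is $1$ (this does not change $S(f)$, since signatures are invariant under nonzero scalar multiplication); then $f = LM(f) + g$ with $LT(g) < LM(f)$, so $S(f)$ itself belongs to the set whose $\min_{\leq_{sign}}$ defines $\Phi(LM(f))$. The whole proof then hinges on converting a strict inequality in this baseline into an explicit $\mathfrak{S}$-reductor, and conversely.

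For $(\Rightarrow)$ I would argue by contraposition: assume $\Phi(LM(f)) <_{sign} S(f)$. By definition of $\Phi$ there exists $h \in I$ of the form $h = LM(f) + g'$ with $LT(g') < LM(f)$ and $S(h) = \Phi(LM(f))$. Taking $t = 1$ and $\alpha \in k^*$ equal to the leading coefficient of $f$, the $LM(f)$ terms in $f$ and in $\alpha h$ cancel; since all remaining terms on both sides are $< LT(f)$ in the tropical order (scaling by an element of $k^*$ preserves $\leq$, so other terms of $\alpha h$ stay below $\alpha LT(h) = LT(f)$), we get $LT(f - \alpha h) < LT(f)$ and $LM(f - \alpha h) \neq LM(f)$. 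Combined with $S(1 \cdot h) = S(h) <_{sign} S(f)$, this exhibits an $\mathfrak{S}$-reduction of $f$, so $f$ is not $\mathfrak{S}$-irreducible.

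For $(\Leftarrow)$ I again argue by contraposition. Suppose $f$ is $\mathfrak{S}$-reducible: there are $t \in T$, $h \in I$, $\alpha \in k^*$ with $f - \alpha t h = g$, $LT(g) < LT(f)$, $LM(g) \neq LM(f)$, and $S(th) <_{sign} S(f)$. The cancellation at $LT(f)$ forces $LM(th) = LM(f)$; after rescaling $th$ to leading coefficient $1$, we can write $th = LM(f) + g''$ with $LT(g'') < LM(f)$, so $S(th)$ lies in the set defining $\Phi(LM(f)) = \Phi(LM(th))$. Hence
\[ \Phi(LM(f)) \leq_{sign} S(th) <_{sign} S(f), \]
and in particular $S(f) \neq \Phi(LM(f))$.

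The main obstacle I anticipate is the bookkeeping around coefficients and scalings: verifying that scaling by $\alpha \in k^*$ preserves tropical term comparisons and leaves signatures invariant (needed both to justify the leading-term cancellation in $(\Rightarrow)$ and the rescaling to leading coefficient $1$ in $(\Leftarrow)$), and handling the small edge case $f \in I'$, where $\Phi$ is not defined on $LM(f)$ and the statement should be read as an assertion about $f \in I \setminus I'$ (on $I'$, every polynomial has signature $0$ and is trivially $\mathfrak{S}$-irreducible).
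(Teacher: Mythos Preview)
Your proof is correct and follows essentially the same approach as the paper: both directions hinge on the baseline inequality $\Phi(LM(f)) \leq_{sign} S(f)$ and on the observation that any $\mathfrak{S}$-reductor $th$ of $f$ must satisfy $LM(th)=LM(f)$, so its signature witnesses $\Phi(LM(f))\leq_{sign} S(th)<_{sign} S(f)$. The paper compresses your $(\Leftarrow)$ argument into the phrase ``by definition'' and states your $(\Rightarrow)$ contrapositive in one line, but the underlying reasoning is identical; your version simply makes the coefficient/scaling bookkeeping and the baseline inequality explicit.
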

\begin{proof}
By definition, if $S(f)=\Phi (LM(f)),$ then clearly $f$ is $\mathfrak{S}$-irreducible.
Regarding to the converse, if $S(f)>_{sign}\Phi (LM(f)),$ then there exists $g \in I$ such that
$LM(g)=LM(f)$ and $S(g)=\Phi (LM(f)),$ and then $g$ $\mathfrak{S}$-reduces $f$.
\end{proof}

\begin{prop} \label{prop:reduction_by_S_basis}
If $G$ is a tropical $\mathfrak{S}$-Gröbner basis, then for any nonzero $f \in I \setminus I',$ there exists $g \in G$ and $t \in T$ such that:
\begin{itemize}
\item $LM(tg)=LM(f)$
\item $S(tg)=tS(g)=S(f)$ if $f$ is irreducible, and $S(tg)=tS(g)<_{sign}S(f)$ otherwise.
\end{itemize}
Hence, there is an $\mathfrak{S}$-reductor for $f$ in $G$ if $f$ is not irreducible.
\end{prop}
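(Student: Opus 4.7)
The plan is to split on whether $f$ is $\mathfrak{S}$-irreducible, and in the non-irreducible case to pass through the ``best signature'' representative provided by $\Phi$.

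If $f$ is $\mathfrak{S}$-irreducible, the existence of $g\in G$ and $t\in T$ with $LM(tg)=LM(f)$ and $tS(g)=S(f)$ is just Definition \ref{def:S-GB}. To upgrade $tS(g)=S(f)$ to $S(tg)=tS(g)=S(f)$ I would observe, via Proposition \ref{prop:im_of_sign}, that $S(f)\in NS(Syz_{f_1})$, and then apply Lemma \ref{lem:sign_comp_mult} which says exactly that $tS(g)\in NS(Syz_{f_1})$ is equivalent to $S(tg)=tS(g)$.

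If $f$ is not $\mathfrak{S}$-irreducible, then by Proposition \ref{prop:S-irreduc} we have $\Phi(LM(f))<_{sign} S(f)$. Because $\Phi$ is defined in Proposition \ref{prop:phi_bijection} as a minimum over a nonempty set, that minimum is attained: there exists a (monic) polynomial $f'\in I$ with $LM(f')=LM(f)$ and $S(f')=\Phi(LM(f))$. By Proposition \ref{prop:S-irreduc} applied in the other direction, $f'$ is $\mathfrak{S}$-irreducible, so the first case gives $g\in G$ and $t\in T$ with $LM(tg)=LM(f')=LM(f)$ and $S(tg)=tS(g)=S(f')=\Phi(LM(f))<_{sign} S(f)$, since $\Phi(LM(f))\in NS(Syz_{f_1})$ makes Lemma \ref{lem:sign_comp_mult} applicable once again. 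This settles the second bullet.

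For the final sentence, given such a pair $(g,t)$ in the non-irreducible case, picking $\alpha\in k^*$ so that $\alpha t g$ cancels the leading term of $f$ yields $LT(f-\alpha tg)<LT(f)$ and $LM(f-\alpha tg)\neq LM(f)$, while $S(tg)<_{sign} S(f)$ provides the signature condition, so $g$ is an $\mathfrak{S}$-reductor of $f$ with multiplier $t$. The only step I see as requiring genuine care is checking that the minimum defining $\Phi$ is actually attained by an honest polynomial $f'$ with the same leading monomial as $f$; this is routine (the relevant degree piece is finite-dimensional, and rescaling preserves both $LM$ and $S$), but it is the linchpin of the whole argument, since once $f'$ is produced everything reduces to invoking the $\mathfrak{S}$-GB property at $f'$ and translating via Lemma \ref{lem:sign_comp_mult}.
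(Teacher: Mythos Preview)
Your argument is correct and follows essentially the same route as the paper: split on $\mathfrak{S}$-irreducibility, and in the non-irreducible case pass through an $\mathfrak{S}$-irreducible polynomial $f'$ (the paper calls it $h$) with $LM(f')=LM(f)$ and $S(f')=\Phi(LM(f))$, then apply the $\mathfrak{S}$-GB definition to $f'$. Your version is actually a bit more explicit than the paper's in two places: you spell out the appeal to Lemma~\ref{lem:sign_comp_mult} (via $S(f),\Phi(LM(f))\in NS(Syz_{f_1})$) to upgrade $tS(g)=S(f)$ to $S(tg)=tS(g)$, and you verify the final ``$\mathfrak{S}$-reductor in $G$'' clause by exhibiting $(g,t,\alpha)$ rather than leaving it implicit.
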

\begin{proof}
If $f$ is irreducible, this is a result of Definition \ref{def:S-GB}.

Let us assume that $f$ is not $\mathfrak{S}$-irreducible. We take $h$ $\mathfrak{S}$-irreducible such that
$LM(h)=LM(f).$ Thanks to Proposition \ref{prop:S-irreduc}, it exists, and $S(h)<_{sign}S(f).$ So $h$ is an $\mathfrak{S}$-reductor of 
$f.$ There are then $t \in T$ and $g \in G$ such that $t LM(g)=LM(h)=LM(f)$ and $tS(g)=S(tg)=S(h).$
With Proposition \ref{prop:S-irreduc}, $tg$ is irreducible. The result is proved.
\end{proof}

We can now prove the desired connection between tropical 
$\mathfrak{S}$-Gröbner bases
and tropical Gröbner bases.

\begin{prop}
If $G$ is a tropical $\mathfrak{S}$-Gröbner basis, 
then $G$ is a tropical Gröbner basis of $I,$ for $<.$
\end{prop}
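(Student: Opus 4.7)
The plan is to use Proposition \ref{prop:reduction_by_S_basis} as the engine, and split into cases depending on whether a given element lies in $I'$ or not. We must show that for every nonzero $f \in I$, there exist $t \in T$ and $g \in G$ with $LM(f) = t \cdot LM(g)$, so that $LM(I)$ is generated as a monoid by $\{LM(g) : g \in G\}$.

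First I would handle the case $f \in I'$. By the definition of tropical $\mathfrak{S}$-Gröbner basis (Definition \ref{def:S-GB}), we have $G' = \{g \in G : S(g) = 0\} \subseteq G$. Since $G'$ is assumed to be a tropical Gröbner basis of $I'$, there exist $g \in G' \subseteq G$ and $t \in T$ with $LM(f) = t \cdot LM(g)$, which settles this case.

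Next I would handle the case $f \in I \setminus I'$. Here I apply Proposition \ref{prop:reduction_by_S_basis} directly: it yields $g \in G$ and $t \in T$ such that $LM(tg) = LM(f)$. Since $g \neq 0$ (it $\mathfrak{S}$-reduces or represents $f$), we have $LM(tg) = t \cdot LM(g)$, and therefore $LM(f) = t \cdot LM(g)$, as required.

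There is no serious obstacle; the content is entirely in Proposition \ref{prop:reduction_by_S_basis} and the fact that $G' \subseteq G$ is built into the definition. The only thing to be careful about is that Proposition \ref{prop:reduction_by_S_basis} is stated for $f \in I \setminus I'$, which is exactly why we must peel off the case $f \in I'$ separately and invoke the hypothesis that $G'$ itself is already a tropical Gröbner basis of $I'$. Combining the two cases shows that every leading monomial of an element of $I$ is divisible by some $LM(g)$ with $g \in G$, which is the definition of a tropical Gröbner basis with respect to $<$.
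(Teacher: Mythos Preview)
Your proof is correct and follows essentially the same approach as the paper: split off the $I'$ part using $G' \subseteq G$, and handle $I \setminus I'$ via Proposition~\ref{prop:reduction_by_S_basis}. The only cosmetic difference is that the paper phrases the second case in terms of a monomial $t \in LM(I)\setminus LM(I')$ and invokes Proposition~\ref{prop:phi_bijection} to produce a witness $f$ before applying Proposition~\ref{prop:reduction_by_S_basis}, whereas you apply Proposition~\ref{prop:reduction_by_S_basis} directly to an arbitrary $f \in I \setminus I'$, which is slightly more economical.
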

\begin{proof}
Let $t \in LM(I) \setminus LM(I').$ With Proposition \ref{prop:phi_bijection}, there exist 
$\sigma \in NS(Syz_{f_1})$ such that $\Phi^{-1}(\sigma)=t$
and $f \in I \setminus I'$ such that $LM(f)=t$ and $S(f)=\sigma.$
By Proposition \ref{prop:reduction_by_S_basis}, 
there exists $g\in G,$ $u \in T$ such that $LM(ug)=LM(f)=t.$
Hence, the span of $\{ LM(g), \: g \in G \}$ contains 
$LM(I) \setminus LM(I'),$
and $G \supset G'.$ Therefore $G$ is a tropical Gröbner basis of $I.$
\end{proof}

And we can also prove some finiteness result on tropical $\mathfrak{S}$-Gröbner bases,
which can be usefully applied to the problem of the
termination of the F5 algorithm.

\begin{prop} \label{prop:finiteness_S_basis}
Every tropical $\mathfrak{S}$-Gröbner basis contains a finite 
tropical $\mathfrak{S}$-Gröbner basis.
\end{prop}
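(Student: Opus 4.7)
My plan is a two-step reduction. I would first establish the existence of \emph{some} finite tropical $\mathfrak{S}$-Gröbner basis $H$ of $I$, not necessarily contained in $G$, and then use $H$ to extract a finite subset of $G$.

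For the first step, by Propositions~\ref{prop:S-irreduc} and~\ref{prop:phi_bijection}, the pairs $(LM(f), S(f))$ for $\mathfrak{S}$-irreducible $f \in I \setminus I'$ are exactly the elements of $P := \{(m, \Phi(m)) : m \in LM(I) \setminus LM(I')\}$. I would argue that $P$ decomposes into only finitely many orbits under the diagonal $T$-action on $T \times T$ sending $(a,b)$ to $(ta, tb)$. This rests on Dickson's lemma applied to both $LM(I)$ and $LM(Syz_{f_1})$, combined with Lemma~\ref{lem:sign_comp_mult}, which controls how $\Phi$ changes under multiplication and forces it to be ``piecewise multiplicative'' with only finitely many branches (governed by the finite generators of $LM(Syz_{f_1})$). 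Selecting one $\mathfrak{S}$-irreducible representative per orbit, and adjoining $G'$, yields the desired finite $H$.

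For the second step, given such $H$, the $\mathfrak{S}$-Gröbner basis property of $G$ supplies, for each $h \in H \setminus G'$, an element $g_h \in G \setminus G'$ and $t_h \in T$ with $t_h LM(g_h) = LM(h)$ and $t_h S(g_h) = S(h)$. I take $G_0 := G' \cup \{g_h : h \in H \setminus G'\}$, which is finite and contained in $G$. To verify that $G_0$ is a tropical $\mathfrak{S}$-Gröbner basis, pick any $\mathfrak{S}$-irreducible $f \in I \setminus I'$: the property of $H$ yields $h \in H$ and $t \in T$ with $(t \, LM(h), t \, S(h)) = (LM(f), S(f))$, and composing with the $g_h$-covering of $h$ gives $(tt_h \, LM(g_h), tt_h \, S(g_h)) = (LM(f), S(f))$, so $g_h \in G_0$ covers $f$ in the sense of Definition~\ref{def:S-GB}.

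The main obstacle will be the finiteness of orbits on $P$ in Step~1: this is the delicate combinatorial point, requiring careful use of the bijection $\Phi$ and of the Noetherianity underlying Dickson's lemma, to rule out the naive danger that pairs $(m, \Phi(m))$ could spread over infinitely many $T$-diagonal orbits in $T \times T$. Once this finiteness is established, the chaining in Step~2 is essentially immediate.
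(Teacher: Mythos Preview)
Your Step~2 is fine: once a finite $\mathfrak{S}$-Gr\"obner basis $H$ exists, chaining through it to extract a finite subset of $G$ works exactly as you describe.

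The gap is in Step~1. Lemma~\ref{lem:sign_comp_mult} controls $S(tf)$ in terms of $S(f)$ for a fixed polynomial $f$; it says nothing about $\Phi(tm)$ versus $t\Phi(m)$. Even when $t\Phi(m) \in NS(Syz_{f_1})$, the multiple $tf$ of an $\mathfrak{S}$-irreducible $f$ with $LM(f)=m$ need not itself be $\mathfrak{S}$-irreducible (some other $h$ may satisfy $LM(h)=tm$ and $S(h) <_{sign} t\Phi(m)$), so $\Phi(tm)$ can be strictly smaller than $t\Phi(m)$. Your ``piecewise multiplicative'' picture is therefore not justified by the tools you cite, and Dickson applied separately to $LM(I)$ and $LM(Syz_{f_1})$ does not by itself bound the number of diagonal $T$-orbits of $P$. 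The paper avoids this detour by applying Dickson directly to $\{(LM(g),S(g)) : g \in G\} \subset T \times T \cong \mathbb{N}^{2n}$, extracting a finite $H_0 \subset G$ of componentwise-minimal pairs, and then --- this is the step you are missing --- performing a case analysis: for an irreducible $f$ covered by some $g_i \in G$ via $t$, and $g_j \in H_0$ with $LM(g_i)=u\,LM(g_j)$ and $S(g_i)=v\,S(g_j)$, one shows that $u \neq v$ contradicts the $\mathfrak{S}$-irreducibility of $f$ (via either an explicit $\mathfrak{S}$-reductor $ut\,g_j$ or a signature drop as in Lemma~\ref{lem:baisse_signature}). That trichotomy is precisely what upgrades componentwise divisibility in $T\times T$ to diagonal divisibility, and it is the real content your Step~1 lacks. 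Incidentally, since the paper's $H_0$ already sits inside $G$, your Step~2 becomes superfluous in that approach.
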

\begin{proof}
Let $G=\{g_i \}_{i \in L}$ be a tropical $\mathfrak{S}$-Gröbner basis.
Let 
\begin{equation*}
      V : \: \:      
     \begin{array}{ccl}
       G & \rightarrow & T \oplus T\\
      g_i & \mapsto & (LM(g_i),S(g_i))\\
     \end{array}
\end{equation*}
be a mapping. By Dickson's lemma, there exists some finite set $J \subset L$
such that the monoid generated by the image of $V$ is generated by
$H_0=\{g_i \}_{i \in J}.$
We claim that $H=H_0 \cup G'$ is a finite tropical
 $\mathfrak{S}$-Gröbner basis.
We have taken the union with $G'$ to avoid any issue with $G'$ being non-minimal.
Let $f \in I \setminus I'$ be an $\mathfrak{S}$-irreducible polynomial.
Since $G$ is a tropical $\mathfrak{S}$-Gröbner basis, there exists 
$g_i \in G$ and $t \in T$ such that $tS(g_i)=S(tg_i)=S(f)$ and 
$tLM(g_i)=LM(tg_i)=LM(f).$
If $i \in J,$ we are fine.
Otherwise, there exists some $j \in J$ and $u,v \in T$ such that
$LM(g_i)=uLM(g_j)$ and $S(g_j)=vS(g_j).$
Three cases are possible.
If $u=v,$ then $t'=ut \in T$ satisfies $t' LM(g_j)=LM(f)$ and $t'S(g_j)=S(f)$ and we are fine.
If $u<_{sign}v$ then $t'=ut \in T$ satisfies $t' LM(g_j)=LM(f)$ and $t'S(g_j)<S(f),$ contradicting
the hypothesis that $f$ is $\mathfrak{S}$-irreducible.
If $u>_{sign}v,$ then for $t'=vt \in T,$ we can take some $\alpha \in k$ such that
$p=f-\alpha t'g_j$ satisfies $LM(p)=LM(f)$ but $S(p)<S(f),$ contradicting the irreducibility
of $f.$
As a consequence, we have proved that $H$ is a tropical 
$\mathfrak{S}$-Gröbner basis.
\end{proof}

The elements of $H_0$ we have used are of special importance, hence we give them a special name.

\begin{deftn}
We say that a non-zero polynomial  $f \in I \setminus I',$ $\mathfrak{S}$-irreducible (with respect to $S(f)$), is
\textbf{primitive $\mathfrak{S}$-irreducible} if there are no polynomials $f' \in I \setminus I'$
and terms $t \in T \setminus \{1 \}$ such that $f'$ is $\mathfrak{S}$-irreducible,
$LM(tf')=LM(f)$ and $S(tf')=S(f).$
\end{deftn}

The proof of Proposition \ref{prop:finiteness_S_basis} implies that we can obtain a 
finite tropical $\mathfrak{S}$-Gröbner basis by keeping a subset of primitive $\mathfrak{S}$-irreducible
polynomials with different leading terms.
Also, it proves there exists a finite tropical $\mathfrak{S}$-Gröbner basis with only primitive
$\mathfrak{S}$-irreducible polynomials (in its $I \setminus I'$ part).

\section{Linear algebra and tropical $\mathfrak{S}$-Gröbner bases}

When the initial polynomials from which we would like to compute a Gröbner basis
are homogeneous, the connection between linear algebra and Gröbner bases is well
known.

\begin{deftn}
Let $c_{n,d}= \binom{n+d-1}{n-1},$ and $B_{n,d}=(x^{d_i})_{1 \leq i \leq c_{n,d}} $ be the monomials of $A_d.$
Then for $f_1, \dots, f_s \in A$ homogeneous polynomials, with $\vert f_i \vert = d_i$, and $d \in \mathbb{N}$, we define $Mac_d(f_1, \dots, f_s)$ to be the matrix whose rows are the polynomials $x^{ \alpha_{i,j}} f_i$ written in the basis $B_{n,d}$ of $A_d$.

We note that 
$Im(Mac_d(f_1, \dots, f_s))=\left\langle f_1, \dots , f_s \right\rangle \cap A_d.$
\end{deftn}

\begin{theo}[\cite{Lazard:1983}] \label{echelon}
For an homogeneous ideal $I=\left\langle f_1,\dots , f_s \right\rangle, $ $(f_1,\dots , f_s)$ is a Gröbner basis of $I$ for a monomial order $\leq_0$ if and only if: for all $d \in \mathbb{N}$,
written in a decreasingly ordered $B_{n,d}$ (according to $\leq_0$), $Mac_d(f_1, \dots, f_s)$ contains an \textit{echelon basis} of $Im(Mac_d(f_1, \dots, f_s))$.
\end{theo}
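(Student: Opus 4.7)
The plan is to prove the two implications separately. Throughout I fix $d \in \mathbb{N}$ and assume that $B_{n,d}$ is listed in decreasing $\leq_0$-order, so that the leading monomial of any row of $Mac_d(f_1,\dots,f_s)$ coincides with the column of its leftmost nonzero entry.

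For the forward direction, I would first invoke the classical fact that the monomials of $B_{n,d} \setminus LM(I)$ form a $k$-basis of $A_d/(I \cap A_d)$, which gives
\[\dim_k(I \cap A_d) = |LM(I) \cap A_d| = \rank Mac_d(f_1,\dots,f_s),\]
the last equality coming from $Im(Mac_d) = I \cap A_d$. Next, I would use the Gröbner basis hypothesis: for every $m \in LM(I) \cap A_d$ there exist $i$ and $\alpha$ with $|x^\alpha|+|f_i|=d$ and $m = x^\alpha LM(f_i)$, so the corresponding row $x^\alpha f_i$ appears in $Mac_d$ and has leading monomial exactly $m$. Picking one such row for each $m \in LM(I) \cap A_d$ yields a family of rows with pairwise distinct pivot columns, hence in row-echelon form; since this family has cardinality equal to $\rank Mac_d$, it is a basis of $Im(Mac_d)$, i.e.\ an echelon basis contained in the matrix.

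For the converse, let $g \in I$ be nonzero, decompose it into homogeneous components $g = \sum_d g_d$, and fix a degree $d$ with $g_d \neq 0$. It suffices to show that $LM(g_d)$ is a monomial multiple of some $LM(f_i)$. Since $g_d \in I \cap A_d = Im(Mac_d)$, I would write $g_d$ as a $k$-linear combination of the echelon basis provided by the hypothesis; because the pivots are distinct, $LM(g_d)$ is forced to equal the pivot monomial of one of the echelon rows. As every row of $Mac_d$ is of the form $x^\alpha f_i$, its pivot is $x^\alpha LM(f_i)$. Hence $LM(g_d)$ lies in the monoid generated by $\{LM(f_1),\dots,LM(f_s)\}$, which is the Gröbner basis property.

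The only mildly delicate point is the dimension equality $\dim_k(I \cap A_d) = |LM(I) \cap A_d|$, which one gets from the existence of normal forms modulo $LM(I)$; after that, both implications are pure bookkeeping about which rows of $Mac_d$ realize which pivot columns. No feature of the valuation enters, which is consistent with the statement being a classical result of Lazard recalled here to motivate the tropical analogue.
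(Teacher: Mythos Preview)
Your proof is correct. Note, however, that the paper does not actually prove this theorem: it is quoted as a classical result of Lazard (citation \cite{Lazard:1983}) and stated without proof, serving only to motivate the tropical analogue in Theorem~\ref{thm:echelon}. Your argument is the standard one---matching pivot columns of $Mac_d$ against the monomials of $LM(I)\cap A_d$ via the dimension count $\dim_k(I\cap A_d)=|LM(I)\cap A_d|$---and there is nothing to compare it against in the paper itself.
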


By \textit{echelon basis}, we mean the following
\begin{deftn}
Let $g_1,\dots, g_r$ be homogeneous polynomials of degree $d$. Let $M$ be the matrix whose $i$-th row is the row vector corresponding to $g_i$ written in $B_{n,d}$.
Then we say that $(g_1,\dots, g_r)$ is an \textit{echelon basis} of $Im(M)$ if there is a permutation matrix $P$ such that $PM$ is under row-echelon form.
\end{deftn}

In other words, $G=(g_1,\dots,g_s)$ is a Gröbner basis of $I$ if and only if for all $d,$ 
an echelon (linear) basis
of $I_d$ is contained in the set 
$\{ x^\alpha g_i, i \in \llbracket 1,s \rrbracket, \: x^\alpha \in T, \: \vert x^\alpha g_i\vert=d  \}.$  

We have an analogous property for tropical Gröbner bases and tropical
 $\mathfrak{S}$-GB. It follows from the study in \cite{Vaccon:2015}
 of a tropical Matrix-F5 algorithm.
 We first need to adapt to the tropical setting the definitions
 of row-echelon form and echelon basis.

\begin{deftn}[Tropical row-echelon form]
Let $M$ be a $l \times m$ matrix which is a Macaulay matrix, written in the basis $B_{n,d}$
of the monomials of $A$ of degree $d.$
We say that $(P,Q) \in GL_n(k) \times GL_m(k),$ 
$Q$ being a permutation matrix,
realize a tropical row-echelon form of $M$ if:
\begin{enumerate}
\item $PMQ$ is upper-triangular and under row-echelon form.
\item The first non-zero coefficient of a row corresponds to the
leading term of the polynomial corresponding to this row.
\end{enumerate} \label{def:trop_row_echelon_form}
\end{deftn}
We can then define a \textit{tropical echelon basis}:
\begin{deftn}[Tropical echelon basis]
Let $g_1,\dots, g_r$ be homogeneous polynomials of degree $d$. Let $M$ be the matrix whose $i$-th row is the row vector corresponding to $g_i$ written in $B_{n,d}$.
Then we say that $g_1,\dots, g_r$ is a \textit{tropical echelon basis}
 of a vector space $V \subset A_d$ 
 if there are two permutation matrices $P,Q$
  such that $PMQ$ realizes a tropical row-echelon form of $M$
  and $span(Rows(M))=V.$
\end{deftn}

This can be adapted to the natural filtration of the vector space
$I$ by $(I_{d,x^\alpha})_{d,x^\alpha}$ we have defined in \ref{def:filtration}.

\begin{theo} \label{thm:echelon}
Suppose that $G$ is a set of $\mathfrak{S}$-irreducible homogeneous
 polynomials of the homogeneous ideal $I$ such that 
 $\{g \in G, \: S(g)=0 \}=G'.$
Then $G$ is a tropical $\mathfrak{S}$-Gröbner basis of $I$ 
if and only if \textbf{
for all} $x^\alpha \in T,$ 
taking $d=\vert x^\alpha \vert + \vert f_1 \vert,$ the set
 \[\{ x^\beta g, \text{ irreducible s.t. } g \in G, \: x^\beta \in T, \vert x^\beta g \vert =d, \: x^\beta S(g)=S(x^\beta g) \leq x^\alpha  \}\]
contains a tropical echelon basis of $I_{d, \leq x^\alpha}.$   
\end{theo}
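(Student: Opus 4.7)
The plan is to prove both implications using the bijection $\Phi$ of Proposition \ref{prop:phi_bijection} to translate between leading monomials and signatures. Write $\mathcal{E}_{d,x^\alpha}$ for the set on the right-hand side of the statement.

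For the forward direction, fix $x^\alpha\in T$, $d=|x^\alpha|+|f_1|$, and split $LM(I_{d,\leq x^\alpha})$ into its $LM(I'_d)$-part and its complement. For $t\in LM(I'_d)$, the fact that $G'\subset G$ is a tropical Gröbner basis of $I'$ provides $g\in G'$ and $x^\beta\in T$ with $LM(x^\beta g)=t$ and $S(g)=0\leq_{sign} x^\alpha$, so $x^\beta g\in\mathcal{E}_{d,x^\alpha}$. For $t\in LM(I_{d,\leq x^\alpha})\setminus LM(I')$, any witness $f\in I_{d,\leq x^\alpha}$ with $LM(f)=t$ satisfies $\Phi(t)\leq_{sign} S(f)\leq_{sign} x^\alpha$ by Proposition \ref{prop:im_of_sign} and the minimality in the definition of $\Phi$. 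The $\mathfrak{S}$-irreducible representative $f_t$ with $LM(f_t)=t$ and $S(f_t)=\Phi(t)$ therefore lies in $I_{d,\leq x^\alpha}$, so applying the $\mathfrak{S}$-GB hypothesis to $f_t$ yields $g\in G$ and $x^\beta\in T$ with $LM(x^\beta g)=t$ and $x^\beta S(g)=\Phi(t)$; Lemma \ref{lem:sign_comp_mult} then gives $S(x^\beta g)=x^\beta S(g)$, and Proposition \ref{prop:S-irreduc} gives $\mathfrak{S}$-irreducibility of $x^\beta g$, so $x^\beta g\in\mathcal{E}_{d,x^\alpha}$. The family obtained by selecting one such element per $t$ has distinct leading monomials exhausting $LM(I_{d,\leq x^\alpha})$, hence is a tropical echelon basis of $I_{d,\leq x^\alpha}$ by dimension count.

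For the converse, let $f\in I\setminus I'$ be $\mathfrak{S}$-irreducible, put $x^\alpha:=S(f)$ and $d:=\deg f$, which equals $|x^\alpha|+|f_1|$ by Lemma \ref{lem:deg_of_sign}. Since $f\in I_{d,\leq x^\alpha}$, any tropical echelon basis of $I_{d,\leq x^\alpha}$ contained in $\mathcal{E}_{d,x^\alpha}$ must contain some $x^\beta g$ with $LM(x^\beta g)=LM(f)$. Membership in $\mathcal{E}_{d,x^\alpha}$ gives $x^\beta S(g)=S(x^\beta g)\leq_{sign} x^\alpha=S(f)$; strict inequality would allow $x^\beta g$ to $\mathfrak{S}$-reduce $f$ with respect to $S(f)$ by cancellation of leading terms, contradicting irreducibility. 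Hence $x^\beta S(g)=S(f)$, which is precisely the condition of Definition \ref{def:S-GB}.

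The main delicate point is the forward direction: one must ensure that each $x^\beta g$ exhibited really lies in $\mathcal{E}_{d,x^\alpha}$, i.e., is $\mathfrak{S}$-irreducible with $x^\beta S(g)=S(x^\beta g)$ rather than strict drop. Both rely on the strengthening $tS(g)=S(f)$ over $S(tg)=S(f)$ flagged in Remark \ref{rmk:div_SGB_irred}, combined with Lemma \ref{lem:sign_comp_mult} and Proposition \ref{prop:S-irreduc}; example \ref{exp:prob_sign} shows this would genuinely fail under the weaker Arri--Perry convention. Once this is handled the rest is routine Lazard-style accounting of leading monomials, dimensions, and matching $G'$ with the signature-$0$ elements of $G$.
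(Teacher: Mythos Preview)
Your proof is correct and follows essentially the same approach as the paper: the paper's two-sentence sketch invokes Proposition~\ref{prop:S-irreduc} for the echelon-implies-$\mathfrak{S}$-GB direction and Proposition~\ref{prop:reduction_by_S_basis} for the converse, and you have simply unpacked those propositions inline (splitting into the $LM(I')$ part versus the complement, passing through $\Phi$, and arguing by contradiction on irreducibility). The one minor remark is that your appeal to Proposition~\ref{prop:im_of_sign} for $\Phi(t)\leq_{sign} S(f)$ is not really needed---that inequality is immediate from the definition of $\Phi$ as a minimum---but this does not affect correctness.
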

\begin{proof}
Using Proposition \ref{prop:S-irreduc}, it is clear that if $G$ satisfy the above-written condition, 
then it satisfies Definition \ref{def:S-GB} of tropical $\mathfrak{S}$-GB.
The converse is also easy using Proposition \ref{prop:reduction_by_S_basis} on an echelon basis
of $I_{d, \leq x^\alpha}$ and remarking that to get an echelon basis, it is enough
to reach all the leading monomials of $I_{d, \leq x^\alpha}.$
\end{proof}

An easy consequence of the previous theorem is the following result of existence:

\begin{prop} \label{prop:S-GB-exist}
Given $G'$ and $f_1,$ consisting of homogeneous polynomials, there exists
a tropical $\mathfrak{S}$-GB for $I=\left\langle G',f_1 \right\rangle.$
\end{prop}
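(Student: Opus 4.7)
The plan is to exhibit an explicit candidate $G$ and verify the echelon characterisation from Theorem \ref{thm:echelon}.

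By Proposition \ref{prop:phi_bijection}, the map $\Phi : LM(I) \setminus LM(I') \to NS(Syz_{f_1}) \setminus \{0\}$ is a bijection, and its definition guarantees, for each $t \in LM(I) \setminus LM(I')$, a polynomial $f_t \in I \setminus I'$ with $LM(f_t) = t$ and $S(f_t) = \Phi(t)$; by Proposition \ref{prop:S-irreduc}, each such $f_t$ is $\mathfrak{S}$-irreducible. I would then set
\[ G := G' \cup \{ f_t : t \in LM(I) \setminus LM(I') \}. \]
Since $\Phi$ takes values in $NS(Syz_{f_1}) \setminus \{0\}$, the equality $\{ g \in G : S(g) = 0 \} = G'$ is immediate.

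To invoke Theorem \ref{thm:echelon}, I would fix $x^\alpha \in T$, set $d = |x^\alpha| + |f_1|$, and propose as tropical echelon basis of $I_{d, \leq x^\alpha}$ the union of an echelon basis of $I'_d$ (available because $G'$ is a tropical Gröbner basis of $I'$) with the family $\{ f_t : |t| + |f_1| = d,\ \Phi(t) \leq_{sign} x^\alpha \}$. Each $f_t$ arises in the prescribed set of Theorem \ref{thm:echelon} with multiplier $x^\beta = 1$, and the leading monomials are pairwise distinct: within the $f_t$-part by injectivity of $t \mapsto LM(f_t) = t$, and against the $I'_d$-part because $LM(f_t) \notin LM(I')$ by construction.

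The real content is the spanning step, and this is the only place requiring genuine care. I would prove it by induction along $\leq_{sign}$: for $f \in I_{d, \leq x^\alpha}$, if $S(f) = 0$ then $f \in I'_d$ and is handled by the $G'$-part; otherwise, setting $\sigma = S(f)$ and $t = \Phi^{-1}(\sigma)$, one combines $f$ with $f_t$ to produce an element of strictly smaller signature. When $LM(f) \neq LM(f_t) = t$ this is a direct application of Lemma \ref{lem:baisse_signature}, while when $LM(f) = t$ one first cancels the leading coefficient by a scalar multiple of $f_t$ (possibly invoking Lemma \ref{lem:baisse_signature} a second time on the residue) to drop below $\sigma$. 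The induction terminates because $\leq_{sign}$ is a well-ordering on monomials of degree $d - |f_1|$, and so $E$ spans $I_{d, \leq x^\alpha}$ and qualifies as a tropical echelon basis. By Theorem \ref{thm:echelon}, $G$ is a tropical $\mathfrak{S}$-Gröbner basis.
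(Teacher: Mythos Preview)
Your argument is correct and follows the same route as the paper's proof---both establish existence by appealing to the echelon characterisation of Theorem~\ref{thm:echelon}---though you spell out explicitly, via the bijection $\Phi$ and an induction on $\leq_{sign}$, what the paper condenses into the single line ``compute a tropical echelon basis for all the $I_{d,\leq x^\alpha}$ and take their union.'' One small slip: in your candidate family you want $|t|=d$ (so that $f_t\in I_d$), not $|t|+|f_1|=d$; the latter would give polynomials of the wrong degree, though the remainder of the argument shows the intended condition.
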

\begin{proof}
It is enough to compute a tropical echelon basis 
for all the $I_{d, \leq x^\alpha},$
by tropical row-echelon form computation (see \cite{Vaccon:2015}),
 and take the set of all these polynomials.
\end{proof}

Even with Proposition \ref{prop:finiteness_S_basis}, the idea of the proof
of Proposition \ref{prop:S-GB-exist} is not enough to obtain an efficient
algorithm. This is why we introduce the F5 criterion and design an
F5 algorithm.

\section{F5 criterion}

In this section, we explain a criterion, the F5 criterion, which 
yields an efficient algorithm 
to compute tropical Gröbner bases.

We need a slightly different notion of $S$-pairs, called here normal pairs.

\begin{deftn}[Normal pair]
Given $g_1,g_2 \in I,$ not both in $I',$ let $Spol(g_1,g_2)=u_1 g_1-u_2 g_2$ be the $S$-polynomial of
$g_1$ and $g_2,$ where $u_i = \frac{lcm(LM(g_1),LM(g_2))}{LT(g_i)}.$ We say that $(g_1,g_2)$
is a \textbf{normal pair} if:
\begin{enumerate}
\item the $g_i$'s are primitive $\mathfrak{S}$-irreducible polynomials.
\item $S(u_i g_i)=LM(u_i)S(g_i)$ for $i=1,2.$
\item $S(u_1 g_1) \neq S(u_2 g_2).$
\end{enumerate} \label{def:normal_pair}
\end{deftn}

\begin{rmk}
With this definition, if $(g_1,g_2)$ is a normal pair,
using Lemma \ref{lem:addition_signature},
$S(Spol(g_1,g_2))=\max (S(u_1 g_1),S(u_2 g_2))$
holds. Moreover, if $S(u_1 g_1) > S(u_2 g_2)$ then $u_1 \neq 1$ as if otherwise, $g_2$ would be
an $\mathfrak{S}$-reductor of $g_1.$ Therefore $S(Spol(g_1,g_2))>max(S(g_1),S(g_2)).$
\end{rmk}

\begin{theo}[F5 criterion]
Suppose that $G$ is a set of $\mathfrak{S}$-irreducible homogeneous polynomials of $I$ such that:
\begin{enumerate}
\item $\{g \in G, \: S(g)=0 \}=G'.$
\item if $f_1 \notin I',$ there exists $g \in G$  such that $S(g)=1.$
\item for any $g_1,g_2 \in G$ such that $(g_1,g_2)$ is a normal pair, there exists
$g \in G$ and $t \in T$ such that $tg$ is $\mathfrak{S}$-irreducible and 
$tS(g)=S(tg)=S(Spol(g_1,g_2)).$
\end{enumerate}
Then $G$ is a $\mathfrak{S}$-Gröbner basis of $I.$ \label{thm:F5crit}
\end{theo}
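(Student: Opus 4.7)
The plan is to establish the equivalent formulation from Definition \ref{def:S-GB}: for every $\mathfrak{S}$-irreducible $f \in I \setminus I'$, there exist $g \in G$ and $t \in T$ with $LM(tg) = LM(f)$ and $tS(g) = S(f)$. I would proceed by Noetherian induction on $S(f)$ with respect to $\leq_{sign}$, which is well-founded by Dickson's lemma (as exploited in Proposition \ref{prop:finiteness_S_basis}). The case $S(f) = 0$ is vacuous since $f \notin I'$, and the case $S(f) = 1$ is dispatched using condition $(2)$ together with Proposition \ref{prop:S-irreduc} and the injectivity of $\Phi$ from Proposition \ref{prop:phi_bijection}: the $g \in G$ with $S(g) = 1$ is itself $\mathfrak{S}$-irreducible, so $LM(g) = \Phi^{-1}(1) = LM(f)$ and the pair $(g, 1)$ works.

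For the inductive step I would argue by contradiction. Suppose $\sigma$ is minimal in $\leq_{sign}$ for which the conclusion fails, with witness $f$. Since $G' \subset G$ is a Gröbner basis of $I'$ and $G$ generates $I$ modulo $I'$ via Proposition \ref{prop:rem_syz}, there exist representations
\[ f = \sum_i c_i t_i g_i + h, \qquad c_i \in k^*,\ t_i \in T,\ g_i \in G \setminus G',\ h \in I'. \]
I would pick one minimizing, in the lex order on non-increasingly sorted multisets, the profile $\bigl(S(t_i g_i)\bigr)_i$. A combination of Lemma \ref{lem:addition_signature} and Lemma \ref{lem:baisse_signature} with optimality would force the top of the profile to equal $\sigma$ exactly: a strictly smaller top would contradict $S(f)=\sigma$, and any cancellation at the top layer would let us rewrite the representation with a smaller profile.

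Two cases then arise. If only one summand $t_{i_0} g_{i_0}$ attains signature $\sigma$, Corollary \ref{cor:sign_prod_irred} and Lemma \ref{lem:sign_comp_mult} would give $t_{i_0} S(g_{i_0}) = \sigma$, and comparing $LM(t_{i_0} g_{i_0})$ to $LM(f) = \Phi^{-1}(\sigma)$ through the bijection of Proposition \ref{prop:phi_bijection} would force them to coincide, producing the desired reductor. Otherwise at least two indices $i, j$ attain signature $\sigma$; I would pass to primitive $\mathfrak{S}$-irreducible representatives of $g_i$ and $g_j$ (which exist by the proof of Proposition \ref{prop:finiteness_S_basis}) and form a normal pair in the sense of Definition \ref{def:normal_pair}. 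Condition $(3)$ would then yield $g \in G$ and $t \in T$ with $tg$ irreducible and $tS(g) = S(Spol(g_i, g_j)) <_{sign} \sigma$, and substituting this $tg$ for the combined top-signature contribution $c_i t_i g_i + c_j t_j g_j$ would produce a representation with strictly smaller profile, contradicting optimality.

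The main obstacle is the bookkeeping in this last reduction: verifying that the two offending summands genuinely organise into a normal pair — the non-equality $S(u_1 g_1) \neq S(u_2 g_2)$ of Definition \ref{def:normal_pair} needs care, with the degenerate equality case handled directly by Lemma \ref{lem:baisse_signature} — and confirming that the reductor furnished by condition $(3)$ strictly lowers the signature profile. Both steps rest on Lemmas \ref{lem:sign_comp_mult}, \ref{lem:baisse_signature}, and \ref{lem:addition_signature}, and on tracking that $LM$'s of the top-signature summands align via $\Phi$ before the lcm is formed.
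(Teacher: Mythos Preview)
Your high-level plan (Noetherian induction on the signature, contradiction via a minimal representation) matches the paper, but the quantity you minimise is the wrong one, and this causes both of your case analyses to break.

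The paper's proof also writes $P=\sum_i Q_i g_i$ with the $x^\gamma S(g_i)=S(x^\gamma g_i)\leq\sigma$ pairwise distinct (this already uses Lemma~\ref{lem:SGB_rewriting}, which you never invoke), but then minimises $\beta=\max_i LT(Q_i g_i)$, \emph{not} the signature profile. The reason is structural: the target statement is about $LM$, and the only tool you have to rewrite a representation---condition~(3)---produces information via $S$-polynomials, which cancel \emph{leading terms}, not signatures. So when $LT(P)<\beta$, one applies Lemma~\ref{lem:Spoly_writing} to the top-$LT$ layer, factors through Lemma~\ref{lem:factor_Spoly}, and invokes condition~(3) on the resulting normal pairs $(g_j,g_k)$; this lets one rewrite each $Spol$ with strictly smaller $LT$ via Lemma~\ref{lem:SGB_rewriting}, contradicting the minimality of~$\beta$. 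The conclusion $LT(P)=\beta$ then hands you some $\tau g_i$ with $LM(\tau g_i)=LM(P)$ and $\tau S(g_i)=S(\tau g_i)\leq\sigma$; irreducibility of $P$ rules out strict inequality.

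By contrast, in your two-top case you claim condition~(3) yields $g,t$ with $tS(g)=S(Spol(g_i,g_j))<_{sign}\sigma$, and that this $tg$ can be ``substituted for $c_i t_i g_i + c_j t_j g_j$''. Neither assertion holds. The signature $S(Spol(g_i,g_j))=\max(u_1 S(g_i),u_2 S(g_j))$ is governed by $\lcm(LM(g_i),LM(g_j))$ and bears no a~priori relation to $\sigma=S(t_i g_i)$; and $tg$ is not equal to $c_i t_i g_i+c_j t_j g_j$ as a polynomial, so no substitution is possible. Condition~(3) is a statement about \emph{leading monomials at a given signature}, not about lowering signatures.

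Your single-top case also has a gap: from $S(t_{i_0}g_{i_0})=\sigma$ you cannot deduce $t_{i_0}S(g_{i_0})=\sigma$ (Lemma~\ref{lem:sign_comp_mult} gives this only when $t_{i_0}S(g_{i_0})\in NS(Syz_{f_1})$, which is not known), and $t_{i_0}g_{i_0}$ need not be $\mathfrak{S}$-irreducible, so $LM(t_{i_0}g_{i_0})=\Phi^{-1}(\sigma)$ does not follow from Proposition~\ref{prop:S-irreduc}. Finally, even the preliminary claim that the top of the minimal profile equals $\sigma$ needs an ingredient you omit: to push a cancelling top layer down you must re-expand a polynomial of signature $<\tau$ as a $G$-combination with summand signatures $<\tau$, and this is exactly the content of Lemma~\ref{lem:SGB_rewriting}.
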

\begin{rmk}
The converse result is clearly true.
\end{rmk}
\begin{rmk}
The $g$ given in the second condition is primitive $\mathfrak{S}$-irreducible, by definition and
using Lemma \ref{lem:deg_of_sign}.
\end{rmk}

Theorem \ref{thm:F5crit} is an analogue of the Buchberger criterion
for tropical $\mathfrak{S}$-Gröbner bases. To prove it,
we adapt the classical proof of the Buchberger criterion.
We need three lemmas, the first two being very classical.

\begin{lem} \label{lem:Spoly_writing}
Let $P_1, \dots, P_r \in A,$ $c_1, \dots, c_r \in k$ and
$\beta$ a term in $A$, $\sigma \in T$  be such that for all
$i$ $LC(P_i)=1,$ $LT(c_i P_i)=\beta,$ $P_i \in I$ and $S(P_i) \leq \sigma.$
Let $P=c_1 P_1 +\dots+c_r P_r.$
If $LT(P)< \beta,$ then there exist some $c_{i,j} \in k$ such that
$P=\sum_{i,j} c_{i,j} Spol(P_i,P_j)$ and $LT(c_{i,j} Spol(P_i,P_j))<\beta.$
\end{lem}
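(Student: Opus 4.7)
The first step is to observe that the twin hypotheses $LC(P_i)=1$ and $LT(c_iP_i)=\beta$ are very rigid: since scalar multiplication by $c_i\in k^*$ shifts the valuations of all terms of $P_i$ uniformly, the tropical order on the monomials of $P_i$ is preserved, giving $LM(c_iP_i)=LM(P_i)$ and $LC(c_iP_i)=c_i$. Writing $\beta=c\cdot m$ with $c\in k^*$ and $m\in T$, the equality $LT(c_iP_i)=\beta$ then forces $LM(P_i)=m$ and $c_i=c$ for every $i$. This reduces us to the uniform setting where $\lcm(LM(P_i),LM(P_j))=m$, the normalizing coefficients $u_i,u_j$ in the definition of $Spol$ are both $1$, and $Spol(P_i,P_j)=P_i-P_j$ for every pair $i\neq j$.

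With this uniformization in hand, I would apply the classical telescoping (partial-summation) identity. Setting $s_j:=\sum_{k=1}^{j}c_k$, rearranging gives
\[P \;=\; \sum_{i=1}^{r}c_iP_i \;=\; \sum_{j=1}^{r-1}s_j\bigl(P_j-P_{j+1}\bigr)+s_r P_r \;=\; \sum_{j=1}^{r-1}s_j\,Spol(P_j,P_{j+1})+s_r P_r.\]
Now the coefficient of $m$ in $P$ is precisely $s_r$, and the hypothesis $LT(P)<\beta$ forbids any term of $P$ from being tropically comparable to or above $\beta=c\cdot m$; applied to the monomial $m$, this forces $s_r=0$ in the classical regime. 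The residual $s_rP_r$ then drops out and $P=\sum_{j=1}^{r-1}s_j\,Spol(P_j,P_{j+1})$ as desired.

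Finally, to verify that each summand satisfies $LT(c_{j,j+1}\,Spol(P_j,P_{j+1}))<\beta$, I would note that $P_j$ and $P_{j+1}$ share the leading monomial $m$ with identical leading coefficient $1$, so these leading terms cancel exactly in the difference, giving $LT(P_j-P_{j+1})<m$. Scalar multiplication by $s_j$ preserves this tropical inequality, so $LT(s_j(P_j-P_{j+1}))<s_j\cdot m\leq \beta$; setting $c_{j,j+1}:=s_j$ (and all other $c_{i,j}:=0$) yields the desired decomposition. The main delicate point is justifying $s_r=0$ from $LT(P)<\beta$ rather than merely $val(s_r)>val(c)$; this is automatic when the valuation is trivial, and in the general valued setting one would iterate the telescoping on the residual $s_rP_r$, which has strictly smaller tropical leading term than $P$, so that the induction terminates by well-foundedness of the tropical order on terms.
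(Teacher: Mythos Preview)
The paper gives no proof of this lemma, calling it (together with the next one) ``very classical,'' so it is implicitly invoking the standard telescoping argument for monomial orders --- exactly the argument you present. Up through the identity $P=\sum_{j=1}^{r-1}s_j\,Spol(P_j,P_{j+1})+s_rP_r$ and the bound $LT\bigl(s_j\,Spol(P_j,P_{j+1})\bigr)<\beta$, your reasoning is correct and matches the intended classical proof.

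The gap is in your handling of the residual $s_rP_r$ when $s_r\neq 0$. Your proposed iteration cannot succeed, for a purely linear-algebraic reason: since every $P_i$ has leading monomial $m$ with leading coefficient $1$, each $Spol(P_i,P_j)=P_i-P_j$ has zero $m$-coefficient, so the $k$-span of all the $Spol(P_i,P_j)$ lies in the hyperplane of $\mathrm{span}(P_1,\dots,P_r)$ on which the $m$-coefficient vanishes. If $s_r\neq 0$ then $P$ has nonzero $m$-coefficient and lies outside that hyperplane, hence cannot be written as $\sum c_{i,j}\,Spol(P_i,P_j)$ at all --- no iteration, and no appeal to well-foundedness of the term order, can change this. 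Concretely, over $\mathbb{Q}_2$ with $w=0$ and $m>_1 x>_1 y$, take $P_1=m+x$, $P_2=m+y$, $c_1=c_2=1$, $\beta=1\cdot m$: then $P=2m+x+y$ satisfies $LT(P)=x<\beta$, yet $Spol(P_1,P_2)=x-y$ and $2m+x+y\notin k\cdot(x-y)$. The subtlety you spotted is therefore genuine: the lemma as literally stated does not carry over from monomial orders to tropical term orders without an additional hypothesis guaranteeing that the $m$-coefficient of $P$ actually vanishes, a point the paper glosses over by deferring to the classical statement.
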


\begin{lem} \label{lem:factor_Spoly}
Let $x^\alpha, x^\beta, x^\gamma, x^\delta \in T$ and $P,Q \in A$
 be such that $LM(x^\alpha P)=LM(x^\beta Q)=x^\gamma$
 and $x^\delta=lcm(LM(P), LM(Q)).$
 Then \[Spol(x^\alpha P, x^\beta Q)=x^{\gamma - \delta} Spol(P,Q). \]
\end{lem}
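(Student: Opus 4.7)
The proof is essentially a direct computation from the definition of the $S$-polynomial, so the plan is to unwind the definitions on both sides and match them.

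First, I would compute $Spol(x^\alpha P, x^\beta Q)$ from Definition \ref{def:normal_pair}. Since $LM(x^\alpha P) = LM(x^\beta Q) = x^\gamma,$ the relevant lcm is just $x^\gamma,$ and the two multipliers are
\[ u_1' = \frac{x^\gamma}{LT(x^\alpha P)} = \frac{x^{\gamma-\alpha}}{LT(P)}, \qquad u_2' = \frac{x^\gamma}{LT(x^\beta Q)} = \frac{x^{\gamma-\beta}}{LT(Q)}. \]
Substituting gives
\[ Spol(x^\alpha P, x^\beta Q) = \frac{x^\gamma}{LT(P)}\,P \;-\; \frac{x^\gamma}{LT(Q)}\,Q. \]

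Next, I would expand the right-hand side. By definition $Spol(P,Q) = \frac{x^\delta}{LT(P)} P - \frac{x^\delta}{LT(Q)} Q,$ so multiplying by $x^{\gamma-\delta}$ yields exactly $\frac{x^\gamma}{LT(P)} P - \frac{x^\gamma}{LT(Q)} Q,$ which matches the expression above.

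The only non-formal point — and the one thing worth checking explicitly — is that $x^{\gamma-\delta}$ is a genuine monomial, i.e.\ that $\gamma \geq \delta$ componentwise. This follows because $x^\alpha LM(P) = x^\gamma$ and $x^\beta LM(Q) = x^\gamma$ show that $x^\gamma$ is a common multiple of $LM(P)$ and $LM(Q),$ hence divisible by $x^\delta = \mathrm{lcm}(LM(P),LM(Q)).$ There is no serious obstacle; the argument is purely computational and does not require any use of the valuation or of the tropical term order, since the definition of $Spol$ is formally the same as in the classical setting.
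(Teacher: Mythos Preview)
Your proof is correct; it is exactly the straightforward unwinding of the definition of $Spol$ that one expects, and your check that $x^\delta \mid x^\gamma$ is the only point worth isolating. The paper itself gives no proof of this lemma, introducing it as one of two ``very classical'' facts, so there is nothing to compare against beyond noting that your argument is the standard one.
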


\begin{lem} \label{lem:SGB_rewriting}
Let $G$ be an $\mathfrak{S}$-Gröbner basis of $I$ up to signature $<\sigma \in T.$ Let $f \in I,$ homogeneous of degree $d$, be such that $S(f) \leq \sigma.$
Then there exist $r \in \mathbb{N},$ $g_1,\dots, g_r \in G,$
$Q_1,\dots, Q_r \in A$ such that for all $i$ and $x^\alpha$ a monomial of $Q_i,$
$S(x^\alpha g_i)=x^\alpha S(g_i) \leq \sigma$ and 
$LT(Q_i g_i) \leq LT(f).$
The $x^\alpha S(g_i)$'s are all distinct, when non-zero.
\end{lem}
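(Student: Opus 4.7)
The plan is to argue by strong induction on $LT(f)$, which is well-founded in the finite, totally-ordered set of monomials of $A_d$. The base case $f=0$ takes $r=0$, with the empty sum satisfying every condition vacuously.

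For the inductive step with $f\neq 0$, the core idea is to produce a single reductor step and recurse on the residue. By Proposition \ref{prop:S-irreduc}, pick any $\mathfrak{S}$-irreducible $h\in I$ with $LM(h)=LM(f)$; then $S(h)=\Phi(LM(h))=\Phi(LM(f))\leq S(f)\leq \sigma$, and Proposition \ref{prop:im_of_sign} gives $S(h)\in NS(Syz_{f_1})$. Applying the $\mathfrak{S}$-Gröbner basis property of $G$ to $h$ supplies $g_\star\in G$ and $t_\star\in T$ with $LM(t_\star g_\star)=LM(f)$ and $t_\star S(g_\star)=S(h)$, hence $S(t_\star g_\star)=t_\star S(g_\star)=S(h)\leq\sigma$ by Lemma \ref{lem:sign_comp_mult}. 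Setting $\alpha=LC(f)/LC(t_\star g_\star)$, the polynomial $f':=f-\alpha\, t_\star g_\star$ lies in $I$, has $LT(f')<LT(f)$ by construction, and $S(f')\leq\max(S(f),S(t_\star g_\star))\leq \sigma$ via Lemmas \ref{lem:addition_signature} and \ref{lem:baisse_signature}. The induction hypothesis applied to $f'$ yields an expression $f'=\sum_j Q_j' g_j'$ of the desired form, and assembling $f=\alpha\, t_\star g_\star+\sum_j Q_j' g_j'$ with regrouping of summands sharing the same $g\in G$ produces the final decomposition.

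Three conditions must then be verified after regrouping. The per-term signature bound $S(x^\alpha g_i)=x^\alpha S(g_i)\leq \sigma$ follows for the new summand by the construction above and for the recursive summands by the inductive hypothesis. The bound $LT(Q_i g_i)\leq LT(f)$ holds because each individual summand $x^\alpha g_i$ appearing in $Q_i g_i$ was a reductor at some step with leading term $\leq LT(f)$, and the leading term of a sum is bounded by the maximum of the leading terms of its summands. The distinctness of the nonzero values $x^\alpha S(g_i)$ across all pairs $(i,\alpha)$ follows from Propositions \ref{prop:S-irreduc} and \ref{prop:phi_bijection}: each reductor $x^\alpha g_i$ used along the iterated reduction is $\mathfrak{S}$-irreducible with signature in $NS(Syz_{f_1})$, so equal nonzero signatures would force, via the bijection $\Phi$, equal leading monomials, contradicting the strictly decreasing leading terms of successive reduction steps.

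The most delicate point will be the borderline case $S(h)=\sigma$, where the reductor supplied by $G$ sits exactly at the edge of where the $\mathfrak{S}$-Gröbner basis property is assumed. This is expected to be absorbed by the right reading of ``up to signature $<\sigma$'' together with the observation that the reductor is extracted via the Gröbner basis property applied to the auxiliary irreducible polynomial $h$ rather than to $f$ directly; the signature constraint really concerns $h$, and the reductor's signature equals $S(h)\leq\sigma$, which is what the conclusion demands.
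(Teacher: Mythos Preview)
Your argument is essentially the paper's own proof, unwound: the paper stacks the candidate reductors $\tau g$ (one per nonzero signature $\leq\sigma$) into a Macaulay matrix and reduces $f$ by a tropical LUP computation, whereas you carry out the very same iterated reduction by strong induction on $LT(f)$. The verification of the three output conditions is done the same way in spirit --- the leading-term bound comes from each reduction step only touching the current leading term, and the distinctness of nonzero signatures comes from the bijection $\Phi$ of Proposition~\ref{prop:phi_bijection}, exactly as you argue.

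The one place where your write-up is not airtight is precisely the boundary case you flag. If at some step $S(h)=\Phi(LM(f))=\sigma$, then $h$ is an $\mathfrak{S}$-irreducible polynomial of signature $\sigma$, and the $\mathfrak{S}$-GB property of $G$, which is only assumed \emph{up to signature $<\sigma$}, does \emph{not} supply a $g_\star,t_\star$ with $LM(t_\star g_\star)=LM(h)$ and $t_\star S(g_\star)=\sigma$. Your suggested fix (``the signature constraint really concerns $h$'') does not close this: the Definition~\ref{def:S-GB} clause for $h$ is simply unavailable at signature $\sigma$. The paper's matrix-based sketch has exactly the same blind spot --- nothing guarantees that the single row it places at signature $\sigma$ hits $\Phi^{-1}(\sigma)$. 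In the paper's only use of the lemma (the proof of Theorem~\ref{thm:F5crit}), it is applied to a polynomial whose signature is strictly $<\sigma$, so the boundary case never occurs; your inductive proof is fully correct under the hypothesis $S(f)<\sigma$, and that is all that is actually needed downstream. A minor additional omission: when $LM(f)\in LM(I')$ you should take the reductor directly from $G'\subset G$ (signature $0$) rather than invoking $\Phi$, which is only defined on $LM(I)\setminus LM(I')$.
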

\begin{proof}
It is clear by linear algebra. One can form
a Macaulay matrix in degree $d$ whose rows corresponds to polynomials
$\tau g$ with $\tau \in T, g \in G$ such that $S(\tau g)=\tau S(g) \leq \sigma.$ Only one per non-zero signature, and each of them reaching an 
element of $LM(I_{d, \leq \sigma}).$
It is then enough to stack $f$ at the bottom of this matrix
and perform a tropical LUP form computation (see Algorithm \ref{algo:trop LUP})
to read the $Q_i$ on the reduction of $f.$
\end{proof}
We can now provide a proof of Theorem \ref{thm:F5crit}.
\begin{proof}
We prove this result by induction on $\sigma \in T$
such that $G$ is an $\mathfrak{S}$-GB up to $\sigma.$
It is clear for $\sigma =1.$

Let us assume that $G$ is an $\mathfrak{S}$-GB up to signature $<\sigma$
for some some $\sigma \in T.$
We can assume that all $g \in G$ satisfy $LC(g)=1.$
Let $P \in I$ be irreducible and such that $S(P)=\sigma.$
We prove that there is $\tau \in  T, g\in G$ such that
$LM(P)=LM(\tau g)$ and $\tau S(g)=\sigma.$

Our second assumption for $G$ implies that there exist at least one primitive
$\mathfrak{S}$-irreducible $g \in G$ and some $\tau \in T$ such that 
$\tau S(g)=S(f)=\sigma.$
If $LM(\tau g)=LM(f)$ we are done.
Otherwise, by Lemma \ref{lem:baisse_signature}, 
there exist some $a, b \in k^*$ such that
$S(a f+b \tau g)=\sigma'$ for some $\sigma'<_{sign}\sigma.$

We can apply Lemma \ref{lem:SGB_rewriting} 
to $af+ b \tau g$ and obtain that 
there exist $r\in \mathbb{N},$ $Q_i \in A,$ $g_i \in G$ such that
$P=\sum_{i=1}^r Q_i g_i,$ $LT(Q_i g_i) \leq P$ and for all
$i,$ and $x^\gamma$ monomial of $Q_i,$
the $x^\gamma S(g_i)=S(x^\gamma g_i) \leq_{sign} \sigma$ are all
distinct. 
We remark that $LT(P) \leq \max_i (LT(g_i Q_i)).$
We denote by $m_i:=LT(g_i Q_i).$

Moreover, we can assume that all the $g_i$'s are primitive $\mathfrak{S}$-irreducible.
Indeed, if among them some $g'$ is not primitive $\mathfrak{S}$-irreducible,
then there exists $h_0,t_0$ in $I \times T\setminus \left\lbrace 1 \right\rbrace$ such that
$h_0$ is $\mathfrak{S}$-irreducible and $LM(t_0 h_0)=LM(g')$ and $t_0S( h_0)=S(g')=S(t_0 h_0).$
We have $S(h_0) \leq_{sign} S(g')<_{sign}\sigma.$ Hence, we can apply the $\mathfrak{S}$-GB property for
$h_0$ and we obtain $g_0',\tau_0$ in $G \times T$ such that $LM(h_0)=LM(\tau_0 g_0')$ and $S(h_0)=S(\tau_0 g_0')=\tau_0 S(g_0').$
We then have $LM(g')=LM(t_0 \tau_0 g_0')$ and $S(g')=t_0 \tau_0 S( g_0')=S(t_0 \tau_0 g_0'),$ with $\deg(LM(g'))>\deg (LM(g_0')).$
As a consequence, this process can only be applied a finite number of times before we obtain a $g_k' \in G$ which is 
primitive $\mathfrak{S}$-irreducible and some $b \in T$ such that
$LM(b g_k')=LM(\tau g)$
and 
$bS( g_k')=S(b g_k')=\sigma' <_{sign} \sigma=S(\tau g).$
Thus, we can assume that all the $g_i$'s are primitive $\mathfrak{S}$-irreducible. 

Among all such possible way of writing $P$ as $\sum_{i=1}^r Q_i g_i,$
we define $\beta$ as the \textbf{minimum} of the $\max_i (LT(g_i Q_i))$'s.
$\beta$ exists thanks to Lemma 2.10 of \cite{CM:2013}.

If $LT(P)=\beta,$ then we are done. Indeed, there is then some $i$ and $\tau$
in the terms of $Q_i$ such that $LT(\tau g_i)=\beta$
and $S(\tau g_i) \leq \sigma.$

We now show that $LT(P)<\beta$ leads to a contradiction.

In that case, we can write that:
\begin{align*}
P&=\sum_{m_i = \beta} Q_i g_i +\sum_{m_i < \beta} Q_i g_i, \\
&=\sum_{m_i = \beta} LT(Q_i) g_i +\sum_{m_i = \beta} (Q_i-LT(Q_i)) g_i+\sum_{m_i < \beta} Q_i g_i.\\
\end{align*}
As $LT(P) <\beta$ and this is also the case for the two last summands in
the second part of the previous equation, 
$LT(\sum_{m_i = \beta} LT(Q_i) g_i)<\beta.$
We write $LT(Q_i)=c_i x^{\alpha_i},$ with $c_i \in k$
and $\beta=c_0 x^{\beta_0}$ for some $c_0 \in k.$
Thanks to Lemma \ref{lem:Spoly_writing} and \ref{lem:baisse_signature},
there are some $c_{j,k} \in k$
and $x^\beta_{j,k}=lcm(LM(g_j),LM(g_k))$ such that
\[\sum_{m_i = \beta} LT(Q_i) g_i=\sum_{m_i, m_j=\beta} c_{j,k} x^{\beta_0-\beta_{j,k}} Spol(g_j,g_k). \]
Moreover, we have for all $j,k$ involved,
$S(x^{\beta_0-\beta_{j,k}} Spol(g_j,g_k))\leq \sigma$
and $LT(c_{j,k} x^{\beta_0-\beta_{j,k}} Spol(g_j,g_k))<\sigma.$

If there is $j,k$ such that $S(Spol(g_j,g_k))= \sigma,$
then, by the way the $Q_i$ were chosen (distinct signatures, multiplicativity
of the signatures), the pair $(g_j,g_k)$ is normal
and the third assumption is enough to conclude.

Otherwise, we have for all $j,k$ involved, $S(Spol(g_j,g_k))< \sigma.$
We can apply Lemma \ref{lem:SGB_rewriting}
to obtain 
$c_{j,k} x^{\beta_0-\beta_{j,k}} Spol(g_j,g_k)=\sum_i Q_i^{j,k} g_i$
such that for all $i$ and $x^\gamma$ monomial of $Q_i^{j,k}$
$LT(Q_i^{j,k} g_i)<\beta$ and $x^\gamma S(g_i)=S(x^\gamma g_i)\leq \sigma.$

All in all, we obtain some $\tilde{Q}_i$ such that 
$P=\sum_i \tilde{Q}_i g_i$ and for all $i$ $LT(\tilde{Q}_i g_i)<\beta.$
This contradicts with the definition of $\beta$ as a minimum.
So $LT(P)=\beta,$ which concludes the proof.
\end{proof}

\begin{rmk}
This theorem holds for $\mathfrak{S}$-GB up to a given signature or, as we work with homogeneous entry polynomials, for
$\mathfrak{S}$-GB up to a given degree (\textit{i.e.} $d-\mathfrak{S}$-GB).
\end{rmk}

\section{A tropical F5 algorithm}


Theorem \ref{thm:F5crit} gives a first idea on how to do a Buchberger-style algorithm for $\mathfrak{S}$-GB.
Yet, deciding in advance whether a pair is a normal pair does not seem to be easy.
Indeed, the second condition require some knowledge on $LM(Syz_{f_1})$ which we usually do not have.
There are two natural ways to handle this question:
Firstly, we could keep track during the algorithm of the syzygies encountered, and use a variable $L$ as a place holder for 
their leading monomials. The second condition can then be replaced by 
$LM(u_i) S(g_i) \notin \left\langle L \right\rangle. $
This is what is used in \cite{Arri-Perry}.
Another way is to only consider the trivial syzygies. This amounts to take 
$\left\langle L \right\rangle = LM(I') $
and use the same replacement for the second condition. This is what is used in
\cite{F5} and \cite{JCFnoJNCF}.

We opt for the \textbf{second choice} (only handling trivial syzygies). This give rise to
the notion of admissible pair.

\begin{deftn}[Admissible pair] \label{def:adm_pairs}
Given $g_1,g_2 \in I,$ not both in $I',$ let $Spol(g_1,g_2)=u_1 g_1-u_2 g_2$ be the $S$-polynomial of
$g_1$ and $g_2.$ We have $u_i = \frac{lcm(LM(g_1),LM(g_2))}{LT(g_i)}.$ We say that $(g_1,g_2)$
is an \textbf{admissible pair} if:
\begin{enumerate}
\item the $g_i$'s are primitive $\mathfrak{S}$-irreducible polynomials.
\item if $S(g_i)\neq 0,$ then $LM(u_i)S(g_i) \notin LM(I').$
\item $S(u_1 g_1) \neq S(u_2 g_2).$
\end{enumerate}
\end{deftn}

We can then remark that handling admissible pairs instead of normal pairs is harmless,
as the latter is a subset of the former.

\begin{lem} \label{lem:crit_adm_pairs}
If a set $G$ satisfies the conditions of Theorem \ref{thm:F5crit} for all its admissible pairs then it is an $\mathfrak{S}$-GB.
\end{lem}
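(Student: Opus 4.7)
The plan is straightforward: show that every normal pair is an admissible pair, so that the hypothesis (the F5 criterion's third condition for all admissible pairs) is strictly stronger than the hypothesis needed for Theorem \ref{thm:F5crit}, whence the conclusion follows immediately.

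More concretely, I would fix a pair $(g_1,g_2)$ of primitive $\mathfrak{S}$-irreducible polynomials in $G$, write $u_i = \mathrm{lcm}(LM(g_1),LM(g_2))/LT(g_i)$ as in Definition \ref{def:normal_pair}, and verify that if $(g_1,g_2)$ is normal then it is admissible. Conditions (1) and (3) are identical in Definitions \ref{def:normal_pair} and \ref{def:adm_pairs}, so only the middle condition needs checking. By the normal-pair condition $S(u_i g_i) = LM(u_i)S(g_i)$, Lemma \ref{lem:sign_comp_mult} gives $LM(u_i)S(g_i) \in NS(Syz_{f_1})$, i.e.\ $LM(u_i)S(g_i) \notin LM(Syz_{f_1})$. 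Since the remark following Proposition \ref{prop:rem_syz} records $LM(I')\subset LM(Syz_{f_1})$, this yields $LM(u_i)S(g_i) \notin LM(I')$, which is exactly the admissibility condition (2).

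With this inclusion in hand, suppose $G$ satisfies the hypotheses of the lemma. Conditions (1) and (2) of Theorem \ref{thm:F5crit} on $G$ itself are identical in the two statements; the only change concerns condition (3), which must hold for every normal pair. But by the above, every normal pair of elements of $G$ is in particular an admissible pair, so the assumed property for admissible pairs specialises to give exactly condition (3) of Theorem \ref{thm:F5crit}. Applying that theorem, $G$ is an $\mathfrak{S}$-Gröbner basis of $I$, as required.

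There is really no obstacle here: the whole content is that normal $\Rightarrow$ admissible, which is a one-line invocation of Lemma \ref{lem:sign_comp_mult} together with $LM(I')\subset LM(Syz_{f_1})$. The merit of the lemma is purely algorithmic, since admissibility only requires knowledge of $LM(I')$ (available during the computation), whereas normality would require $LM(Syz_{f_1})$ (which we do not know a priori).
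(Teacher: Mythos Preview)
Your proposal is correct and follows exactly the route the paper intends: the paper does not give a formal proof of this lemma but only the sentence ``handling admissible pairs instead of normal pairs is harmless, as the latter is a subset of the former'' immediately preceding it. Your argument supplies precisely the missing verification of that inclusion, via Lemma~\ref{lem:sign_comp_mult} and the containment $LM(I')\subset LM(Syz_{f_1})$.
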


In the general case, $LM(Syz_{f_1})$ is not known in advance. However,
it can be determined \textbf{inductively} on signatures.
This is how the following algorithm will proceed.
From a polynomial $g,$ the signature of $x^\alpha g$ will
be \textbf{guessed} as $x^\alpha S(g),$ and after an $\mathfrak{S}$-GB
up to signature $<x^\alpha S(g)$ is computed, we can decide whether
$S(x^\alpha g)=x^\alpha S(g),$ or else $x^\alpha g$ happens
to be reduced to zero.
In the following, we certify inductively whether for a processed
$x^\alpha g,$ the \textbf{guessed} signature $x^\alpha S(g)$
equals the \textbf{true} signature $S(x^\alpha g).$
Similarly, \textbf{guessed} admissible pairs are inductively certified to
be \textbf{true} admissibles pairs or not
once condition 3 of \ref{def:adm_pairs} is certified.
Using this idea, we provide a first version of an 
F5 algorithm in Algorithm
\ref{F5_algo}, using  
Algorithm \ref{algo:SymbPreproc} for Symbolic Preprocessing.

\IncMargin{1em}
\begin{algorithm} 
\DontPrintSemicolon

 \SetKwInOut{Input}{input}\SetKwInOut{Output}{output}

 \Input{$G'$ a tropical GB of $I'$ consisting of homogeneous polynomials, $f_1$ an homogeneous polynomial, not in $I'$}
 \Output{A tropical $\mathfrak{S}$-GB $G$ of $I'+\left\langle f_1 \right\rangle$}

$G \leftarrow \{ (0,g) \text{ for g in } G' \}$ ;  \;
$f \leftarrow f \mod G'$ (classical reduction) ; \;
$G \leftarrow G \cup \{ (1,f) \}$ ;  \;
$B \leftarrow \{ \text{guessed admissible pairs of } G \}$ ; \;
$d \leftarrow 1$ ; \;
\While{$B \neq \emptyset$}{
$P$ \textbf{receives} the pop of the guessed admissible pairs in $B$ of degree $d$ ; \;
\textbf{Write} them in a Macaulay matrix $M_d$, along with their $\mathfrak{S}$-reductors obtained from $G$ (one per non-zero signature) by \textbf{Symbolic-Preprocessing}$(P,G)$ (Algorithm \ref{algo:SymbPreproc}); \;
\textbf{Apply} Algorithm \ref{algo:trop LUP} to compute the $U$ in the tropical LUP form of $M$ (no choice of pivot) ; \;
\textbf{Add} to $G$ all the polynomials obtained from $\widetilde{M}$ that provide new leading monomial up to their signature ; \;
\textbf{Add} to $B$ the corresponding new admissible pairs ; \;
$d \leftarrow d+1$ ; \;
}			
Return $G$ ; \; 

 \caption{A first F5 algorithm} \label{F5_algo}
\end{algorithm}
\DecMargin{1em}

\begin{rmk}
\textbf{Only signature zero} is allowed to appear multiple times in the matrix in construction.
\end{rmk}
The reason is the following: because of Proposition
\ref{prop:S-irreduc}, for an irreducible polynomial
with a given signature, its leading monomial
is determined by its signature.
After performing the tropical row-echelon form computation,
all rows corresponds to irreducible polynomials, hence
two rows produced with the same signature are redundant:
either they will produce the same leading monomial or 
they would reduce to zero.

\IncMargin{1em}
\begin{algorithm}

\SetKwInOut{Input}{input}\SetKwInOut{Output}{output}

\Input{$P$, a set of admissible pairs of degree d and $G$, a $\mathfrak{S}$-GB up to degree $d-1$}
\Output{A Macaulay matrix of degree d}
$D \leftarrow$ the set of the \textbf{leading monomials} of the polynomials in $P$ \;
$C \leftarrow$ the set of the \textbf{monomials} of the polynomials in $P$ \;

$U \leftarrow $ the polynomials of $P$ \;
\While{$C \neq D$}{
$m \leftarrow \max (C \setminus D)$ \;
$D \leftarrow D \cup \{ m \}$ \;
$V \leftarrow \emptyset$ \;
\For{$g \in G$}{
\If{$LM(g) \mid m$}{
$V \leftarrow V \cup \{(g, \frac{m}{LM(g)}) \}$ \;}
}
$(g, \delta) \leftarrow$ the element of $V$ with $\delta g$ of \textbf{smallest signature}, with tie-breaking by taking minimal $\delta$ (for degree then for $\leq_{sign}$) \;  
$U \leftarrow U \cup \{ \delta g \}$ \;
$D \leftarrow D \cup \{ \text{monomials of } \delta g \}$ \;
}
$M \leftarrow$ the polynomials of $U,$ written in Macaulay matrix of degree $d$
and ordered by increasing signature, with no repetition of signature outside of signature $0$ (choosing smallest leading monomial to break a tie of signature) \;			
Return $M$ \; 
 \caption{Symbolic-Preprocessing} \label{algo:SymbPreproc}
\end{algorithm}
\DecMargin{1em}

The tropical LUP form computation to obtain a row-echelon matrix,
with no choice of pivot, is described in Algorithm \ref{algo:trop LUP}.
See \cite{Vaccon:2015} for more details. The result we want to prove is then:

\IncMargin{1em}
\begin{algorithm} 
\DontPrintSemicolon

 \SetKwInOut{Input}{input}\SetKwInOut{Output}{output}

 \Input{$M$, a Macaulay matrix of degree $d$ in $A$, with $n_{row}$ rows and $n_{col}$ columns, and $mon$ a list of monomials indexing the columns of $M.$}
 \Output{$\widetilde{M}$, the $U$ of the tropical LUP-form of $M$}

$\widetilde{M} \leftarrow M$ ;  \;
\eIf{$n_{col}=1$ or $n_{row}=0$ or $M$ has no non-zero entry}{
				Return $\widetilde{M}$  ;\;
	}{			
\For{$i=1$ to $n_{row}$}{
\textbf{Find} $j$ such that $\widetilde{M}_{i,j}$ has the greatest term $\widetilde{M}_{i,j} x^{mon_j}$ for $\leq$ of the row $i$; \;
\textbf{Swap} the columns $1$ and $j$ of $\widetilde{M}$, and the $1$ and $j$ entries of $mon$; \;
By \textbf{pivoting} with the first row, eliminates the coefficients of the other rows on the first column; \;
\textbf{Proceed recursively} on the submatrix $\widetilde{M}_{i \geq 2, j \geq 2}$; \;}
Return $\widetilde{M}$; \;}

 \caption{The tropical LUP algorithm} \label{algo:trop LUP}
\end{algorithm}
\DecMargin{1em}

\begin{theo}
Algorithm \ref{F5_algo} computes an $\mathfrak{S}$-GB of $I.$
\end{theo}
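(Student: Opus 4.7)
The plan is to invoke the F5 criterion (Theorem \ref{thm:F5crit}), in its admissible-pair form provided by Lemma \ref{lem:crit_adm_pairs}, and verify its three hypotheses for the set $G$ returned by Algorithm \ref{F5_algo}. Conditions (1) and (2) fall out of the initialization: line~1 places every element of $G'$ in $G$ with signature $0$, and lines~2--3 add $f = f_1 \bmod G'$ with signature $1$ (which equals $S(f_1)$ by Lemma \ref{lem:deg_of_sign} since $f_1 \notin I'$). All that remains is to establish condition (3) for every admissible pair $(g_1,g_2)$ whose members eventually appear in $G$.

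I would prove this by induction on the degree $d$ of the main while-loop, with the invariant that just before iteration $d$, the current $G$ is an $\mathfrak{S}$-GB up to all signatures of degree $\leq d - |f_1| - 1$ (the link between polynomial degree and signature degree being Lemma \ref{lem:deg_of_sign}). Fix $d$ and examine the Macaulay matrix $M_d$ produced by Symbolic-Preprocessing: its rows consist of the $u_i g_i$ associated with admissible pairs of degree $d$, together with multiples $\delta g$ drawn from $G$ to supply $\mathfrak{S}$-reductors of smallest guessed signature for every monomial appearing in the support of the already-placed rows. By the inductive hypothesis the reductors taken from $G$ have true signature equal to their guessed one $\delta S(g)$, so sorting the rows of $M_d$ by increasing signature makes the tropical LUP reduction of Algorithm \ref{algo:trop LUP} reduce each row only by rows of strictly smaller signature. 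Proposition \ref{prop:S-irreduc} then shows that each surviving output row is $\mathfrak{S}$-irreducible up to its guessed signature, and Theorem \ref{thm:alternative_irred} certifies that its guessed signature equals its true signature. A row that vanishes exhibits its guessed signature as an element of $LM(\mathrm{Syz}_{f_1})$, so if it came from an $S$-polynomial of an admissible pair, that pair fails to be normal and the F5 criterion imposes no requirement. A row that survives, for an admissible pair $(g_1,g_2)$ popped at degree $d$, yields the pair $(g,t)$ with $tS(g)=S(\mathrm{Spol}(g_1,g_2))$ demanded by condition~(3).

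The main technical obstacle is the interplay between \emph{guessed} and \emph{true} signatures, together with the claim that Symbolic-Preprocessing collects enough reductors. Concretely, for every monomial $m$ in the column support of $M_d$ one must show that either $m$ is the leading monomial of some row already placed, or the reductor $\delta g$ chosen for $m$ has guessed signature strictly smaller than the signature of any row in which $m$ appears as a non-leading term. This is exactly what Example \ref{exp:prob_sign} warns against, since multiplication by a term can shrink a signature; the choice rule in Algorithm \ref{algo:SymbPreproc}, \emph{smallest signature, then smallest $\delta$}, is the key to letting Lemma \ref{lem:sign_comp_mult} and the inductive hypothesis propagate irreducibility downward through the reduction. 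Termination is a separate but easy matter: by the argument of Proposition \ref{prop:finiteness_S_basis} and Dickson's lemma applied to $T \oplus T$, only finitely many strictly new pairs $(LM(g), S(g))$ can ever be added, so $B$ must empty after finitely many degree iterations.
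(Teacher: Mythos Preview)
Your overall architecture matches the paper's: verify Lemma~\ref{lem:crit_adm_pairs} for the output $G$, with conditions~(1) and~(2) coming from the initialization and condition~(3) from the handling of each processed pair. Termination via Dickson on $T\oplus T$ is also in the spirit of the paper's argument (the paper phrases it slightly differently, arguing that once a finite $\mathfrak{S}$-GB is reached at some degree~$d$, no new pairs arise beyond degree~$2d$). So the skeleton is right.

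There are, however, two concrete issues. First, the claim that ``by the inductive hypothesis the reductors taken from $G$ have true signature equal to their guessed one $\delta S(g)$'' does not follow from a degree-only inductive hypothesis: even if $G$ is an $\mathfrak{S}$-GB up to degree $d-1$, a multiple $\delta g$ in degree~$d$ can have $\delta S(g)\in LM(\mathrm{Syz}_{f_1})$, hence $S(\delta g)<_{\mathrm{sign}}\delta S(g)$ by Lemma~\ref{lem:sign_comp_mult}. This is harmless for the reduction itself (a smaller true signature is still a valid $\mathfrak{S}$-reductor), but it is simply a false assertion as stated.

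Second, and more importantly, you correctly identify ``does Symbolic-Preprocessing collect enough reductors?'' as the main obstacle, but you do not close it; invoking the tie-breaking rule of Algorithm~\ref{algo:SymbPreproc} is not an argument. The paper closes this gap by inducting on \emph{signature}, not on degree. At each step one assumes $G$ together with the already-reduced rows of $M_d$ form an $\mathfrak{S}$-GB up to signature $<_{\mathrm{sign}} x^\beta$; this assumption is itself established by applying the F5 criterion (Theorem~\ref{thm:F5crit}) \emph{inside} the induction, after noting that no normal pair can have signature strictly between $x^\alpha$ and $x^\beta$. Only then does Theorem~\ref{thm:echelon} guarantee that the rows of guessed signature $<_{\mathrm{sign}} x^\beta$ already present in $M_d$ form a tropical echelon basis of $I_{d,<x^\beta}$, i.e.\ that every $\mathfrak{S}$-reductor needed for the row of signature $x^\beta$ is actually there. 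Your degree-level induction does not give you access to this intermediate $\mathfrak{S}$-GB property within degree~$d$, and without it the appeal to Proposition~\ref{prop:S-irreduc} and Theorem~\ref{thm:alternative_irred} for the reduced row is unjustified. The fix is to refine the induction to run over signatures (equivalently, a nested induction: degree, then signature within each degree), and to invoke Theorem~\ref{thm:F5crit} and Theorem~\ref{thm:echelon} at each signature step rather than only at the end.
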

\begin{proof}

\textbf{Termination:} Assuming correctness, after (theoretically) performing the algorithm for all degree $d$
 in $\mathbb{N}$, we obtain an $\mathfrak{S}$-GB. 
 Since by Proposition \ref{prop:finiteness_S_basis} all $\mathfrak{S}$-GB contain a finite $\mathfrak{S}$-GB
 then at some degree $d$ we have computed a finite $\mathfrak{S}$-GB.
 As a consequence, all $S$-pairs from degree $d+1$ to degree $2d$ (at most) will not yield any
 new polynomial in $G$ (no new leading monomial), and thus there will be no $S$-pair of 
 degree more than $2d,$ which proves the termination of the algorithm.

\textbf{Correctness:} 
We proceed by induction on the signature to prove that the result
of Algorithm \ref{F5_algo} is a tropical $\mathfrak{S}$-GB.
The result is clear for signature $\leq_{sign} 1.$

For the induction step, we assume that the result is proved up to signature $\leq_{sign} x^\alpha,$
with $\vert x^\alpha f_1 \vert =d.$
Let $x^\beta$ be the smallest guessed signature of $M_d$ of signature $>_{sign} x^\alpha.$

We first remark that if there are rows of guessed signature $>_{sign}x^\beta$ that
are of true signature $<_{sign}x^\beta$ then:
\textbf{1.} We can conclude that there is no normal pair popped from $B$ with second half of a pair with signature $x^\gamma$ 
such that $x^\alpha <_{sign} x^\gamma <_{sign}x^\beta$ because of condition 2 of Definition \ref{def:normal_pair} (which  prevents such signature to drop).
\textbf{2.} Using the F5 Criterion Theorem \ref{thm:F5crit}, it proves that we have in $G$ (and the rows of
$M_d$ up to signature $x^\alpha$ that are added to $G$) an $\mathfrak{S}$-GB up to signature $<_{sign}x^\beta.$
\textbf{3.} As a consequence, using Theorem \ref{thm:echelon} the Symbolic Preprocessing has produced exactly enough rows of guessed (and true) signature $<_{sign}x^\beta$ from $G$ to 
$\mathfrak{S}$-reduce the row of guessed signature $x^\beta.$
Indeed, since we have an $\mathfrak{S}$-GB up to $<_{sign} x^\beta,$ all necessary leading monomials could be
attained by product monomial-polynomial of $G$ with guessed signature $<_{sign} x^\beta$ or
through the echelon form up to $<_{sign}x^\beta$ of $M_d.$  
The last consequence is of course also true if there is no such row with a gap between the
guessed and the true signature. 

Two possibilities can occur for the result of the reduction of the row of guessed signature $x^\beta$:
\textbf{1.} The row reduces to zero. Then the signature $x^\beta$ is not possible. We then have in $G$ an
$\mathfrak{S}$-GB up to signature $\leq_{sign} x^\beta.$
\textbf{2.} The row does not reduce to zero. Then, depending on whether the reduced row
provide a new leading monomial for $I_{\leq_{sign} x^\beta},$ we add it to $G.$ We then have in $G$ an
$\mathfrak{S}$-GB up to signature $\leq_{sign} x^\beta.$
This concludes the proof by induction.
We then can apply the modified F5 Criterion, Lemma \ref{lem:crit_adm_pairs} to conclude that the output of 
Algorithm \ref{F5_algo} is indeed an $\mathfrak{S}$-GB.
\end{proof}

To conclude the proof of Theorem \ref{thm_intro}, 
the main result on the efficiency of the F5 algorithm is 
still valid for its tropical version:

\begin{prop}
If $f_1$ is not a zero-divisor in $A/I',$ then all the processed matrices
$M_d$ are (left-) injective. In other words, no row reduces to zero.
\end{prop}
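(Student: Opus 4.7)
The plan is to combine the admissibility condition on pairs with Theorem~\ref{thm:alternative_irred} under the non-zero-divisor hypothesis. The key starting input is the remark following Proposition~\ref{prop:rem_syz}: when $f_1$ is not a zero-divisor in $A/I'$, one has $LM(Syz_{f_1}) = LM(I')$, and so $NS(Syz_{f_1}) = T \setminus LM(I')$.

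First, I would show by induction on the algorithm that every row of $M_d$ with non-zero guessed signature has that signature in $NS(Syz_{f_1})$. The rows of $M_d$ come in three kinds: rows coming from $G'$ carry signature $0$; the $S$-polynomial rows of admissible pairs carry guessed signatures of the shape $LM(u_i) S(g_i)$ which, by condition~2 of Definition~\ref{def:adm_pairs}, are outside $LM(I')$; and the $\mathfrak{S}$-reductors $\delta g$ produced by Symbolic-Preprocessing inherit their guessed signatures $\delta S(g)$ from elements $g \in G$ previously output by the algorithm. By the induction hypothesis each such $S(g)$ lies in $NS(Syz_{f_1}) = T \setminus LM(I')$, and since $LM(I')$ is an ideal of monomials, $\delta S(g) \in T \setminus LM(I')$ as well. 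Under our hypothesis all non-zero guessed signatures therefore lie in $NS(Syz_{f_1})$.

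Now suppose for contradiction that some row $R$ of $M_d$ with non-zero guessed signature $x^\beta$ reduces to zero during the tropical LUP step. The row $R$ corresponds to a polynomial of the form $p = (c x^\beta + u) f_1 + h$ with $c \in k^*$, $LT(u) < c x^\beta$ and $h \in I'$. Since the LUP algorithm pivots only using rows of strictly smaller signature, the reduction of $R$ is $\mathfrak{S}$-irreducible with respect to $x^\beta$; so Theorem~\ref{thm:alternative_irred}, applied to the resulting zero polynomial, forces $x^\beta \in LM(Syz_{f_1}) = LM(I')$. This contradicts the previous paragraph, and therefore no row can reduce to zero, i.e.\ $M_d$ is left-injective.

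The main obstacle is making the correspondence between the algorithm's \emph{guessed} signatures and the \emph{true} signatures required by Theorem~\ref{thm:alternative_irred} airtight, i.e.\ ruling out signature drops. The non-zero-divisor hypothesis handles this cleanly, since every syzygy is then trivial (coming from $I'$) and is excluded at the source by the admissibility condition; thus the induction step above goes through without needing any further case analysis on true versus guessed signatures.
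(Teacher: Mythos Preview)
Your overall strategy coincides with the paper's: exploit $LM(Syz_{f_1}) = LM(I')$ under the non-zero-divisor hypothesis, argue that the non-zero guessed signatures appearing in $M_d$ avoid $LM(I')$, and conclude (via Theorem~\ref{thm:alternative_irred} in your write-up, or directly in the paper) that no syzygy can be produced. The paper's own proof is two lines and does not separate the reductor rows from the $S$-polynomial rows, so you are attempting more detail than the paper provides.

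That extra detail, however, contains a logical slip in the treatment of the reductor rows. You assert that because $S(g) \in T \setminus LM(I')$ and $LM(I')$ is a monomial ideal, one also has $\delta S(g) \in T \setminus LM(I')$. This implication points the wrong way: a monomial ideal is closed under multiplication by monomials, but its \emph{complement} is not. For example, with $LM(I') = \langle z^2 \rangle$ and $S(g) = z$ one has $S(g) \notin LM(I')$, yet $\delta = z$ gives $\delta S(g) = z^2 \in LM(I')$. So nothing in your induction prevents a reductor $\delta g$ produced by Symbolic-Preprocessing from carrying a guessed signature lying inside $LM(I')$. This breaks the inductive claim, and it also undermines the subsequent appeal to Theorem~\ref{thm:alternative_irred}: the hypothesis $LT(u) < c x^\beta$ there is a $\leq$-comparison, whereas the rows above $R$ are only ordered by $\leq_{sign}$, and these two orders agree on $NS(I')$ but diverge once monomials of $LM(I')$ enter. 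A correct argument must either show separately that such reductor rows cannot reduce to zero, or restrict attention to rows whose guessed signature lies in $NS(I')$ (for which all earlier rows automatically also have guessed signature in $NS(I')$, since at a fixed degree every monomial of $NS(I')$ is $<_{sign}$ every monomial of $LM(I')$).
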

\begin{proof}
In this case, $LM(Syz_{f_1})=LM(I').$
Hence, with the choice of rows of $M_d$
avoiding guessed signature in $LM(Syz_{f_1})$
no syzygy can be produced.
\end{proof}


\begin{rmk}[Rewritability]
Thanks to Theorem \ref{thm:F5crit}, it is possible to replace the polynomials
in $P$ in the call to \textbf{Symbolic-Preprocessing} on Line 8
of Algorithm \ref{F5_algo}. They can be replaced by any
other multiple of element of $G$ of the same signature.
Indeed, if one of them, $h,$ is of signature $x^\alpha,$ the algorithm
computes a tropical $\mathfrak{S}$-Gröbner basis up signature
$<x^\alpha.$ Hence, $h$ can be replaced by any other polynomial of same signature,
it will be reduced to the same polynomial.
By induction, it proves all of them can be replaced at the same time.
This paves the way for the Rewritten techniques of \cite{F5}.
The idea, as far as we understand it, is then to use 
the polynomial that has been the most reduced to produce a polynomial of 
signature $S(tg)$ for the upcoming reduction.
Taking the $x^\beta g$ ($g \in G)$ of signature $x^\alpha$ such that $g$
has the biggest signature possible is a first reasonnable idea.\footnote{Indeed,
such a $g$ is at first glance the most reduced possible.}
It actually can lead to a substantial reduction of the running time of the F5 algorithm.
\end{rmk}

\section{Implementation and numerical results}

A toy implementation of our algorithms 
in Sagemath \cite{Sage} 
is available
on \url{https://gist.github.com/TristanVaccon}.

\begin{rmk}
It is possible to apply Algorithm \ref{F5_algo} to compute a tropical
Gröbner basis
of $I$ given by $F=(P_1,\dots,P_s)$ by performing complete computation 
succesively for $(P_1),$ $(P_1,P_2),$ $(P_1,P_2,P_3), \dots$
adding a polynomial at a time playing the part $f_1$ played in the rest of the article.
As we only deal with homogeneous polynomials, it is also possible
to do the global computation degree by degree, and at a given degree
iteratively on the initial polynomials.
By indexing accordingly the signatures, as in \cite{F5},
the algorithm can be adapted straightforwardly. This is what has been chosen
in the implementation we have achieved.
\end{rmk}

We have gathered some numerical results in the following array. Timings are in seconds of CPU time.\footnote{Everything was performed in a guest Ubuntu 14.04
inside a Virtual Machine, with 4 processors and 29 GB of RAM.}
We have compared ours with that of the algorithms of
Chan and Maclagan in \cite{CM:2013} (in Macaulay 2)
and Markwig and Ren in \cite{MY:2015} (in Singular),
provided in \cite{MY:2015}. A dot means that
the computation could not complete.
Entry systems are homogenized. Base field is $\mathbb{Q}.$

\hspace{-.5cm}
\begin{tabular}{|c|c|c|c|c|c|c|c|c|}
\hline 
 & Katsura 3 & 4 & 5 & 6 & 7 & Cyclic 4 & 5 & 6 \\ 
\hline 
\cite{CM:2013} & $\leq 1$ &• & • & • & • & • & • & • \\
\hline 
\cite{MY:2015} & $\leq 1$ &$\leq 1$ & $\leq 1$ &•  & • & $\leq 1$ & $\leq 1$ & • \\ 
\hline 
Trop. F5 & $\leq 1$ & 5 & 74 & 513 & • & 4 & 353 & • \\ 
\hline 
\end{tabular} 

Loss in precision has also been estimated in the following setting.
For a given $p,$ we take three polynomials with random coefficients in $\mathbb{Z}_p$
(using the Haar measure)
in $\mathbb{Q}_p[x,y,z]$ of degree $2 \leq d_1 \leq d_2 \leq d_3 \leq 4.$
For any given choice of $d_i$'s, we repeat the experiment 50 times.
Coefficients of the initial polynomials are all given at some high enough
precision $O(p^N).$ Coefficients of the output tropical GB
are known at individual precision $O(p^{N-m}).$
We compute the total mean and max on those $m$'s on the obtained tropical GB.
Results are compiled in the following array as couples of mean and max, with $D=d_1+d_2+d_3-2$ the Macaulay bound.
\hspace{-.5cm}
\begin{tabular}{|c|c|c|c|c|c|c|c|}
\hline 
\tiny{$w=[0,0,0]$} & $D=4$ & 5 & 6 & 7 & 8 & 9 & 10  \\ 
\hline 
$p=2$ &  (.3,8) & (.4,11) & (.1,10) & (.1,11) & (.1,13) & (.1,9) & (.2,12)\\ 
\hline 
3 & (.1,4) & (.1,5) & (.2,7) & (.1,13) & (.1,16) & (.1,5)&(0,6) \\ 
\hline 
101 & (0,1) & (0,0) & (0,0) & (0,1) & (0,1) & (0,0)&(0,1) \\ 
\hline
65519 & (0,0) &(0,0)&(0,0) &(0,0)&(0,0) & (0,0)&(0,0) \\ 
\hline  
\end{tabular} \\
\hspace{-.5cm}
\begin{tabular}{|c|c|c|c|c|c|c|c|}
\hline 
\tiny{$w=[1,-3,2]$} & $D=4$ & 5 & 6 & 7 & 8 & 9 & 10  \\ 
\hline 
$p=2$ &  (.2,7) & (.5,12) & (3.3,45) & (3.5,29) & (3.7,24) & (4.8,85) & (4.8,86)\\ 
\hline 
3 & (1.5,13) & (1.2,9) & (4.2,20) & (3.6,19) & (4.3,22) & (6,33)&(5.8,43) \\ 
\hline 
101 & (.1,2) & (.1,3) & (.1,4) & (.1,4) & (0,3) & (.2,5)&(.3,6) \\ 
\hline
65519 & (0,0) &(0,0)&(0,0) &(0,0)&(0,0) & (0,0)&(0,0) \\ 
\hline  
\end{tabular} 

As for Tropical Matrix-F5, a weight differing from $w=[0,0,0]$
yields bigger loss in precision.
Regarding to precision in row-reduction, in F5, 
this weight always use the best pivot on each row.
For Matrix-F5, it is always the best pivot
available in the matrix.
In view of our data, we can observe that the loss in precision
for Tropical F5 on these examples,
even though it is, as expected, bigger, has remained
reasonnable compared to the one of \cite{Vaccon:2015} 
that allowed full choice of pivot.

\vspace{-.2cm}
\section{Future works}

In this article, we have investigated the main step
for a complete F4-style tropical F5 algorithm.
We would like to understand more deeply
the Rewritten critertion of \cite{F5}.
We would also like to understand the natural extension of our work
to a Tropical F4 and to a Tropical F5 for non-homogeneous entry polynomials.

\begin{small}

\bibliographystyle{plain}

\end{small}

\end{document}